\documentclass[11pt]{amsart}
\usepackage[english]{babel}
\usepackage{physics}
\usepackage{braket}
\usepackage{dsfont}
\usepackage{xcolor}
\usepackage{graphicx}
\usepackage{tikz}
\usetikzlibrary{matrix}

\usepackage[margin=1in]{geometry}
\usepackage[pagebackref, colorlinks = true, linkcolor = blue, urlcolor  = blue, citecolor = red]{hyperref}

\usepackage{amsfonts}
\usepackage{amsmath}
\usepackage{amssymb}
\usepackage{amsthm}

\usepackage{blkarray}

\usepackage{physics}

\usepackage{pgfplots}
\pgfplotsset{compat = 1.16}

\usepackage{float}

\newtheorem{definition}{Definition}
\newtheorem{theorem}{Theorem}
\newtheorem{lemma}{Lemma}

\DeclareMathSymbol{\shortminus}{\mathbin}{AMSa}{"39}

\DeclareMathOperator{\id}{id}
\DeclareMathOperator{\Span}{span}
\DeclareMathOperator{\spec}{spec}
\DeclareMathOperator{\transp}{transp}

\newcommand{\bigzero}{\mbox{\normalfont\Large 0}}

\usepackage{tikz}
\usepackage{xstring}
\newcommand{\permutationOperator}[1]{
    \node[site] (l1)                 {};
    \node[site] (l2) [below of = l1] {};
    \node[site] (l3) [below of = l2] {};
    \node[]     (m1) [right of = l1] {};
    \node[]     (m2) [right of = l2] {};
    \node[]     (m3) [right of = l3] {};
    \node[site] (r1) [right of = m1] {};
    \node[site] (r2) [right of = m2] {};
    \node[site] (r3) [right of = m3] {};
    \IfStrEqCase{#1}{
        {id}{
            \draw[-] (l1) to (r1);
            \draw[-] (l2) to (r2);
            \draw[-] (l3) to (r3);
        }
        {12}{
            \draw[-] (l1) to (r2);
            \draw[-] (l2) to (r1);
            \draw[-] (l3) to (r3);
        }
        {13}{
            \draw[-] (l1) to (r3);
            \draw[-] (l2) to (r2);
            \draw[-] (l3) to (r1);
        }
        {23}{
            \draw[-] (l1) to (r1);
            \draw[-] (l2) to (r3);
            \draw[-] (l3) to (r2);
        }
        {123}{
            \draw[-] (l1) to (r2);
            \draw[-] (l2) to (r3);
            \draw[-] (l3) to (r1);
        }
        {321}{
            \draw[-] (l1) to (r3);
            \draw[-] (l2) to (r1);
            \draw[-] (l3) to (r2);
        }
    }
}
\newcommand{\transposedPermutationOperator}[1]{
    \node[site] (l1)                 {};
    \node[site] (l2) [below of = l1] {};
    \node[site] (l3) [below of = l2] {};
    \node[]     (m1) [right of = l1] {};
    \node[]     (m2) [right of = l2] {};
    \node[]     (m3) [right of = l3] {};
    \node[site] (r1) [right of = m1] {};
    \node[site] (r2) [right of = m2] {};
    \node[site] (r3) [right of = m3] {};
    \IfStrEqCase{#1}{
        {id}{
            \draw[-] (l1) to                   (r1);
            \draw[-] (l2) to                   (r2);
            \draw[-] (l3) to                   (r3);
        }
        {12}{
            \draw[-] (l1) to [bend left = 90]  (l2);
            \draw[-] (r1) to [bend right = 90] (r2);
            \draw[-] (l3) to                   (r3);
        }
        {13}{
            \draw[-] (l1) to [bend left = 60]  (l3);
            \draw[-] (r1) to [bend right = 60] (r3);
            \draw[-] (l2) to                   (r2);
        }
        {23}{
            \draw[-] (l1) to                   (r1);
            \draw[-] (l2) to                   (r3);
            \draw[-] (l3) to                   (r2);
        }
        {123}{
            \draw[-] (l1) to [bend left = 90]  (l3);
            \draw[-] (r1) to [bend right = 90] (r2);
            \draw[-] (l2) to                   (r3);
        }
        {321}{
            \draw[-] (l1) to [bend left = 90]  (l2);
            \draw[-] (r1) to [bend right = 90] (r3);
            \draw[-] (l3) to                   (r2);
        }
    }
}
\newcommand{\basisVector}[1]{
    \node[site] (l1)                 {};
    \node[site] (l2) [below of = l1] {};
    \node[site] (l3) [below of = l2] {};
    \node[site] (r1) [right of = l1] {};
    \node[site] (r2) [right of = l2] {};
    \node[site] (r3) [right of = l3] {};
    \IfStrEqCase{#1}{
        {1}{
            \draw[-] (l1) to [bend left = 90] (l2);
            \draw[-] (l3) to                  (r3);
        }
        {2}{
            \draw[-] (l1) to [bend left = 90] (l3);
            \draw[-] (l2) to                  (r2);
        }
    }
}
\newcommand{\R}[1]{
    \resizebox{3em}{!}{
        \begin{tikzpicture}[line width = 2pt,
                            site/.style = {circle,
                                           fill = black!80!white,
                                           thick}]
            \permutationOperator{#1}
        \end{tikzpicture}
    }
}
\newcommand{\RT}[1]{
    \resizebox{3em}{!}{
        \begin{tikzpicture}[line width = 2pt,
                            xscale = 1.5,
                            site/.style = {circle,
                                           fill = black!80!white,
                                           thick}]
            \transposedPermutationOperator{#1}
        \end{tikzpicture}
    }
}
\newcommand{\smallRT}[1]{
    \raisebox{\dimexpr-.5\height+.5\depth+.8ex\relax}{
        \boxed{
            \hspace{-\fboxsep}
            \resizebox{!}{1.2em}{
                \begin{tikzpicture}[line width = 4pt,
                                    xscale = 1.5,
                                    site/.style = {draw = none,
                                                   fill = none}]
                    \transposedPermutationOperator{#1}
                \end{tikzpicture}
            }
            \hspace{-\fboxsep}
        }
    }
}
\newcommand{\V}[1]{
    \raisebox{\dimexpr-.5\height+.5\depth+.8ex\relax}{
        \boxed{
            \hspace{-\fboxsep}
            \resizebox{!}{1.2em}{
                \begin{tikzpicture}[line width = 4pt,
                                    xscale = 1.5,
                                    site/.style = {draw = none,
                                                   fill = none}]
                    \basisVector{#1}
                \end{tikzpicture}
            }
            \hspace{-\fboxsep}
        }
    }
}

\title{A geometrical description of the universal~$1 \to 2$ asymmetric quantum cloning region}

\author{Ion Nechita}
\email{nechita@irsamc.ups-tlse.fr}
\address{Laboratoire de Physique Th\'eorique, Universit\'e de Toulouse, CNRS, UPS, France}
\author{Cl\'ement Pellegrini} 
 \email{clement.pellegrini@math.univ-toulouse.fr
}
\address{Institut de Math\'ematiques, IMT, Universit\'e de Toulouse (UMR 5219), CNRS, UPS, 31062 Toulouse, Cedex 9, France}
\author{Denis Rochette}
\email{denis.rochette@math.univ-toulouse.fr}
\address{Institut de Math\'ematiques, IMT, Universit\'e de Toulouse (UMR 5219), CNRS, UPS, 31062 Toulouse, Cedex 9, France}

\date{\today}

\begin{document}
\begin{abstract}
We consider the problem of determining the achievable region of parameters for universal $1 \to 2$ asymmetric quantum cloning. Measuring the cloning performance with the figure of merit of singlet fraction, we show that the physical region is a union of ellipses in the plane. Equivalently, we characterize the parameter region of quantum state compatibility of two possibly different isotropic states, considering, for the first time, negative singlet fractions.
\end{abstract}

\maketitle

\tableofcontents

\section{Introduction}

The fact that quantum information cannot be copied is one of the most striking differences between quantum information theory and classical information theory. This fact, discovered in the early days of modern quantum information theory \cite{wootters1982single}, is a no-go result from a quantum communication perspective, being at the same time a cornerstone of cryptography protocols: the impossibility of non-cloning prevents a malicious eavesdropper from intercepting a message and copying it without disturbing the original.  Perfect cloning being possible only for families of perfectly distinguishable quantum states, one can naturally ask whether one can relax this very stringent requirement by asking for approximate clones of given qualities. This is the topic of \emph{universal asymmetric quantum cloning}, the topic of the current paper. 

The quantum cloning problem has received a lot of attention in the past three decades. Starting from the early pioneering work on universal quantum cloners \cite{buvzek1996quantum}, many authors studied the different cloning scenarios (symmetric vs.~asymmetric, qubit vs.~qudit, etc) \cite{cerf1998asymmetric,werner1998optimal,keyl1999optimal,cerf2000asymmetric,fiuravsek2005highly}. Two series of papers are concerned with the most general, asymmetric, $1 \to N$ quantum cloning problem: one from Kay and collaborators \cite{kay2012optimal,kay2014optimal,kay2016optimal}, and another one using techniques from group representation theory, by \'Cwikli\'nski, Horodecki, Mozrzymas, and Studzi\'nski \cite{cwiklinski2012region,studzinski2013commutant,studzinski2014group}. Importantly for us, Hashagen studied in \cite{hashagen2016universal} the $1 \to 2$ universal asymmetric case, focusing on different figures of merit; the techniques used in that work are based on previous results of Eggeling and Werner \cite{eggeling2001separability} and Vollbrecht and Werner \cite{vollbrecht2001entanglement} about the separability of symmetric states. 

In the current work, we focus on universal $1 \to 2$ asymmetric cloning, in the general case of qudits (quantum systems of arbitrary fixed dimension). We shall study this problem from the following perspective: given some target pair of figures of merit, does there exist a quantum channel from one copy of the system to two copies, having the desired qualities? We shall restrict ourselves to the fidelity figure of merit (and to the singlet fraction \cite{kay2009optimal}). We prove, starting from first principles, that the achievable fidelity region is a \emph{union of ellipses}, with the optimal one coming from a restricted class of quantum cloners. Although the shape of the fidelity region was derived previously (see, e.g.~\cite[Section 6]{hashagen2016universal}), our approach is self-contained and gives a geometrical intuition on the desired region. We also show that, in order to reach non-optimal points from the achievable region, one has to use the full generality of the symmetrized quantum cloners available to us, providing the range of achievable negative singlet fractions. We informally summarize our results here, see Theorem \ref{thm:restricted} and Figure \ref{FiguresMerit:1}, and Theorem \ref{thm:unrestricted} and Figure \ref{FiguresMerit:2} respectively. Briefly speaking, 
we shall consider quantum channels $T: \mathcal{M}_d \to \mathcal{M}^{\otimes 2}_{d}$ such that the marginals $T_i$ satisfy $T_i(\rho)=p_i\rho+(1-p_i)\frac{I_d}{d}$; in this setting we call the parameters $p_i$ the singlet fractions. We shall also study the associated quantum fidelity $f_i=F(\rho,T_i(\rho))$ which actually does not depend on the state $\rho$. We are interested in the set of the pairs  $(p_1, p_2)$ (resp.~$(f_1, f_2)$) which are physically realizable. 

\begin{theorem}
	The achievable set of quantum fidelities (resp.~singlet fractions) for the $1 \to 2$ asymmetric quantum cloning problem can be described by two real parameters $s,t$ as follows: 
	$$ \begin{cases}
        s &= \: \frac{d(f_1+f_2) - 2}{d-1}\\
        t &= \: \frac{d(f_1-f_2) }{d-1}
	\end{cases}
	\qquad \text{ and } \qquad
	\begin{cases}
		s &= \: p_1 + p_2\\
		t &= \: p_1 - p_2.
	\end{cases}$$
Let $a=\frac{1}{\sqrt{d^2-1}}$ and $b=\frac{1}{d^2-1}$. The figures of merit for the the optimal quantum cloners are given by the ellipse
$$ \frac{t^2}{a^2}  +\frac{\left(s - \frac{d^2-2}{d^2-1} \right)^2}{b^2} \leq d^2.$$

In the general case, with no restriction on the form of the cloner, the achievable set of figures of merit is a union of a 1-parameter family of ellipses indexed by $\lambda \in [0,d]$ described by
$$ \frac{t^2}{a^2} + \frac{\left(s - \frac{\lambda \, d - 2}{d^2 - 1}\right)^2}{b^2} \leq \lambda^2.$$
 The optimal cloners correspond to $\lambda = d$. 
\end{theorem}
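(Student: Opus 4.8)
\emph{Overview.} The plan is to pass to covariant cloners, encode them as positive elements of the relevant commutant algebra, reduce complete positivity to the positivity of a single $2\times 2$ block, and then read off the ellipses from the Bloch-ball geometry of that block. First I would argue that the achievable region is unchanged under restriction to channels covariant under $U(d)$: since the singlet fractions and fidelities are built from unitarily invariant quantities, twirling a channel $T$ by $\int dU\,(U\otimes U)^\dagger\,T(U(\cdot)U^\dagger)\,(U\otimes U)$ produces a covariant channel with the same $(p_1,p_2)$. By Schur's lemma a covariant channel $\mathcal M_d\to\mathcal M_d$ is automatically depolarizing, so each marginal $T_i$ has the asserted form, the $p_i$ and $f_i$ are well defined and state-independent, and a direct computation gives $f_i=\frac{(d-1)p_i+1}{d}$. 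This is exactly the stated affine change of variables: it makes $s,t$ agree in the two parametrizations, so the fidelity region and the singlet-fraction region coincide in the $(s,t)$-plane and it suffices to treat the $p_i$.

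\emph{The commutant algebra.} Via the Choi--Jamio\l{}kowski isomorphism, covariant channels correspond to positive semidefinite Choi matrices $C_T\in\mathcal M_d^{\otimes 3}$ lying in the commutant of $\bar U\otimes U\otimes U$, namely the walled Brauer algebra spanned by the six partially transposed permutation operators. Decomposing $\bar V\otimes V\otimes V$ into $U(d)$-irreducibles, the fundamental $V$ appears with multiplicity two (once through $\mathrm{Sym}^2$ and once through $\wedge^2$) while two further irreducibles appear with multiplicity one; hence the algebra is isomorphic to $\mathbb C\oplus\mathbb C\oplus M_2(\mathbb C)$, of total dimension $1+1+4=6$ as expected. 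Consequently $C_T\ge 0$ is equivalent to two scalars $x_1,x_2\ge 0$ together with positivity of a single Hermitian block $M=\begin{pmatrix} m_{11}&m_{12}\\ \bar m_{12}&m_{22}\end{pmatrix}\ge 0$, and trace preservation $\operatorname{Tr}_{23}C_T=I_d$ collapses, again by Schur's lemma, to one scalar equation coupling $\operatorname{Tr}M$ with $x_1,x_2$.

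\emph{The geometry of the region.} The depolarizing parameter of the $i$-th marginal is a fixed linear functional of $C_T$, namely the overlap with the contraction of the input leg onto output $i$; these contraction operators live entirely in the $M_2$ block, so $p_1,p_2$ — hence $s=p_1+p_2$ and $t=p_1-p_2$ — are affine functions of the entries of $M$ alone. Fixing $\operatorname{Tr}M=\tau$, the admissible $M\ge 0$ form a Bloch ball of radius $\tau/2$, and its affine image in the $(s,t)$-plane is a filled ellipse whose centre shifts linearly in $\tau$ (from the $\tfrac{\tau}{2}I$ part of $M$) and whose radii scale linearly in $\tau$. Rescaling $\tau$ to a normalized parameter $\lambda$ yields precisely $\frac{t^2}{a^2}+\frac{(s-\frac{\lambda d-2}{d^2-1})^2}{b^2}\le\lambda^2$; the two distinct constants $a,b$ (satisfying $a^2=b$) should emerge from the different norms of the symmetric combination $E_{12}+E_{13}$ (controlling $s$) and the antisymmetric combination $E_{12}-E_{13}$ (controlling $t$) in the block metric. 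Finally, trace preservation together with $x_1,x_2\ge 0$ bounds $\tau$ and gives $\lambda\in[0,d]$, the extreme $\lambda=d$ being reached at $x_1=x_2=0$, i.e.\ by the restricted cloners supported on the multiplicity-two block; these give the optimal ellipse, while smaller $\lambda$ (with its leftward-shifting centre) sweeps out the points of negative singlet fraction.

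\emph{Main obstacle.} The structural steps are standard, so the real work lives inside the $M_2$ block: computing the change of basis between the diagram basis (partially transposed permutations) and the matrix-unit basis, evaluating the pairing coefficients that pin down the explicit functionals $p_i(M)$, and thereby verifying the exact $d$-dependent data — the constants $a=\frac{1}{\sqrt{d^2-1}}$ and $b=\frac{1}{d^2-1}$, the centre $\frac{\lambda d-2}{d^2-1}$, and that the endpoints $\lambda\in[0,d]$ are genuinely attained. It is these normalizations, rather than the shape of the argument, where the calculation must be carried out carefully.
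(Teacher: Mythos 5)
Your proposal follows essentially the same route as the paper: twirl to a covariant cloner, identify the Choi matrix with a positive element of the six-dimensional commutant of $\bar U\otimes U\otimes U$, observe that this algebra is $\mathbb{C}\oplus\mathbb{C}\oplus M_2(\mathbb{C})$ so that complete positivity reduces to two scalar inequalities plus positivity of a single $2\times 2$ Hermitian block, use trace preservation as one linear constraint, and realize each ellipse as the affine image of a Bloch ball of fixed trace. The paper does exactly this in concrete coordinates: it spans the multiplicity-two isotypic component by the vectors $u_i,v_i$ of Eqs.~\eqref{eq:def-ui}--\eqref{eq:def-vi}, writes the four rank-$d$ diagrams as explicit $2\times 2$ blocks, and diagonalizes the contribution of the identity and the output swap on the orthogonal complement. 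Your Bloch-ball picture is a clean way to see \emph{why} the centre drifts affinely and the semi-axes scale linearly in the trace parameter, and your derivation of $\lambda\in[0,d]$ from $x_1,x_2\ge 0$ plus the trace condition is the same mechanism as the paper's $0\le\lambda\le d$.

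Two points need care before this becomes a proof. First, the assertion that $p_1,p_2$ are affine functions of the entries of $M$ \emph{alone} is false for a general element of the algebra: the identity and the output swap also act nontrivially on the multiplicity-two component, so a priori $p_i$ depends on $(M,x_1,x_2)$. The claim becomes correct only after imposing the normalization $\Tr C_{\widetilde T}=d$, via $p_i=\big(\Tr[C_{\widetilde T}\,V^{\Gamma}((1\ i{+}1))]-1\big)/(d^2-1)$ with $V^{\Gamma}((1\ i{+}1))$ supported on the multiplicity-two subspace; this must be said explicitly, since it is precisely what makes the fixed-$\tau$ slice a single ellipse rather than a union over $(x_1,x_2)$. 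Second, and more substantively, the content of the theorem is the explicit data $a=(d^2-1)^{-1/2}$, $b=(d^2-1)^{-1}$, $c_\lambda=\frac{\lambda d-2}{d^2-1}$ and the endpoint $\lambda=d$, and your write-up defers all of it, including the fact that the image ellipse is axis-aligned in $(s,t)$ (which requires the Hilbert--Schmidt orthogonality of the symmetric and antisymmetric functionals that you only hint at). As it stands this is a faithful and structurally sound skeleton of the paper's argument, but the determinant computation on the block \eqref{blockDiagonalization:2}, which is where the stated constants actually come from, remains to be carried out.
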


Let us stress out that the novel contribution of the current work is threefold: 
\begin{itemize}
	\item we show that optimal quantum cloners can be constructed from a restricted set of permutation operators, which describe an ellipse in the set of achievable fidelities (or singlet fractions);
	\item allowing for fully general cloners, we describe the achievable set as a union of ellipses, painting a geometrical picture of $1 \to 2$ asymmetric quantum cloning;
	\item our approach is self-contained, starting from first principles: we use the symmetrization technique of \cite{eggeling2001separability,vollbrecht2001entanglement} and then study the set of linear combinations of permutation operators giving valid quantum channels using basic linear algebra.
\end{itemize}

The paper is organized as follows. In Section \ref{sec:asym-q-cloning}, we recall some basic definitions and set up the problem. Section \ref{sec:symmetrization} contains the symmetrization argument, reducing the problem to a linear algebra one. In Section \ref{sec:graphical} we briefly review the graphical formalism for tensor calculus, which will be of help in Section \ref{sec:math-clone}, where we study the properties of the permutation operators of interest. Sections \ref{sec:restricted} and \ref{sec:general} contain the main results of the paper, the description of the achievable region for $1 \to 2$ asymmetric quantum cloning. 

\section{Asymmetric Quantum Cloning}\label{sec:asym-q-cloning}
Let~$\mathcal{H} = \mathbb{C}^d$ be a finite~$d$ dimensional Hilbert space. We denote by~$\mathcal{M}_d$ the space of~$d \times d$ complex matrices and by~$\mathcal{U}_d$ the group of unitary matrices. A \emph{quantum state} on~$\mathcal{H}$ is described by a density matrix~$\rho$, which is a unit trace, positive semi-definite matrix. We consider the set of all density matrices
\begin{equation*}
    \mathcal{D}_d = \big{\{} \rho \in \mathcal{M}_d \: \big{|} \: \rho \geq 0 \text{ and } \Tr \rho = 1 \big{\}}.
\end{equation*}
The set of density matrices is a convex set, where the extremal points are called \emph{pure quantum states}. These are unit rank projections~$\ketbra{\psi}{\psi}$, for~$\psi \in \mathcal{H}$ and~$\norm{\psi} = 1$. A \emph{quantum channel}~$T : \mathcal{M}_d \to \mathcal{M}_{d^\prime}$ describes the evolution of a quantum state. These are the completely positive and trace preserving linear (\textsc{cptp}) maps from~$\mathcal{M}_d$ to~$\mathcal{M}_{d^\prime}$, and represent the most general transform of a quantum state \cite{nielsen2010quantum,watrous2018theory}. On a composite quantum systems (i.e., a tensor product of Hilbert spaces), the \emph{marginals} of a quantum channel~$T : \mathcal{M}_d \to \mathcal{M}^{\otimes N}_{d^\prime}$ are denoted~$T_i$ for~$i \in [N]$ and defined by the partial trace~$T_i : \mathcal{M}_d \to \mathcal{M}_{d^\prime}$
\begin{equation*}
    T_i(X) = \Tr_{\scriptscriptstyle [N] \backslash \{i\}} \big{[} T(X) \big{]}, \qquad \forall X \in \mathcal{M}_d.
\end{equation*}
A marginal of a quantum channel is also a quantum channel. This gives rise to the question of compatibility of quantum channels \cite{heinosaari2016invitation,hsieh2021quantum}.
\begin{definition}[Quantum channel compatibility]
    Let~$\mathcal{H}_{0}$ be a input Hilbert space, and consider also output spaces $\mathcal{H}_1, \ldots \mathcal H_N$, together with quantum channels $T_i : \mathcal B(\mathcal H_0) \to \mathcal B(\mathcal H_i)$. The \emph{quantum channel compatibility problem} consists determining whether there exists a global quantum channel~$T:\mathcal B(\mathcal H_0) \to \mathcal B(\mathcal H:=\otimes_{i=1}^N\mathcal H_i)$, compatible with all the~$T_i$'s, that is
    \begin{equation*}
        \Tr_{\scriptscriptstyle \mathcal{H} \backslash \mathcal{H}_i} \circ \: T = T_i, \qquad \forall i = 1, \ldots, N.
    \end{equation*}
\end{definition}

The perfect~$1 \to N$ quantum cloning problem consists in determining whether there exists a quantum channel~$T : \mathcal{M}_d \to \mathcal{M}^{\otimes N}_{d}$, called a \emph{quantum cloning map}, compatible with the identity, i.e. for all~$i \in [N]$ and~$\rho \in \mathcal{D}_d$,~$T_i(\rho) = \rho$. The \emph{no-cloning theorem} \cite{wootters1982single} states that such a quantum channel cannot exist. Even if perfect quantum cloning is impossible, it can be performed approximately. That is, each marginal~$T_i(\rho)$ of a quantum cloning map must be as close as possible to the quantum state~$\rho$. The \emph{quantum cloning problem} is the problem of determining a quantum cloning map with the marginals the ``closest'' to the quantum state~$\rho$. In this work, we shall consider two figures of merit for the ``closeness'' of marginals, the quantum fidelity and the singlet fraction. The \emph{quantum fidelity} is a very well known distinguishability measure in quantum information theory which is defined for a quantum state $\rho$ and a pure state $\sigma = \ketbra{\psi}{\psi}$ by
\begin{equation*}
    F(\rho , \sigma) = 
\langle \psi | \rho | \psi \rangle.
\end{equation*}
Note that the quantum fidelity is not a metric.  We have $F(\rho , \sigma) \in [0 , 1]$, with~$F(\rho , \sigma) = 1$ if an only if~$\rho = \sigma = \ketbra{\psi}{\psi}$. In this paper, we are going to describe the set of fidelities of the marginals of quantum cloners, both in the average and the worst case scenario. For $1 \to 2$ cloners, these fidelities (which we shall call figures of merit) describe a convex region in the plane, which we shall describe in Theorems \ref{thm:restricted} and \ref{thm:unrestricted}. Other figures of merit in relation to the quantum cloning have been considered \cite{hashagen2016universal}. 


This setting gives rise to two quantum cloning problems: the \emph{symmetric} quantum cloning problem where the figures of merit are the same on each marginal, and the \emph{asymmetric} quantum cloning problem where the figures of merit can be different. In this paper we consider the more general asymmetric quantum cloning problem. Let~$s=(s_1,\ldots,s_N) \in [0 , 1]^N$ be a weight vector and~$T$ a quantum cloning map, we denote by
\begin{align*}
    \textit{(worst)} && F_s(T) &= \sum^N_{i = 1} s_i \cdot \inf_{\rho \text{ pure}} F \big{(} \rho , T_i(\rho) \big{)} \\
    \textit{(average)} && \bar{F}_s(T) &= \sum^N_{i = 1} s_i \cdot \mathop{\mathbb{E}}_{\rho \text{ pure}} \Big{[} F \big{(} \rho , T_i(\rho) \big{)} \Big{]}
\end{align*}
the \emph{worst} and \emph{average} quantum fidelity of the marginals of $T$, weighted according to the vector $s$. Since perfect cloning is impossible, the~$s$ vector allows us to give weights to the different marginals. This leads us to the two following problems.
\begin{definition}[Asymmetric quantum cloning problems]
    Let~$s$ be a weight vector in~$[0 , 1]^N$. The worst (resp. average) asymmetric quantum cloning problem is the supremum over the quantum channels of the worst (resp. average) quantum fidelity, i.e.
    \begin{align*}
        \textit{(worst)} & \qquad\qquad\qquad \sup_{T \textsc{ cptp}} F_s(T)  \\
        \textit{(average)} & \qquad\qquad\qquad \sup_{T \textsc{ cptp}} \bar{F}_s(T)
    \end{align*}
\end{definition}

\section{Permutation operators}\label{sec:symmetrization}
Finding the best quantum cloning map for a given weight vector~$s$ in~$[0 , 1]^N$ is an optimization problem over the set of quantum channels. A natural direction to approach the problems is to restrict the set of admissible quantum cloning maps.
\begin{definition}[Symmetrized quantum channel]
    Let~$T: \mathcal{M}_d \to \mathcal{M}^{\otimes N}_d$ be a quantum channel. The \emph{symmetrized version} of~$T$ is the quantum channel~$\widetilde{T}$ defined for all~$\rho \in \mathcal{D}_d$ by
    \begin{equation*}
        \widetilde{T}(\rho) = \int_{\mathcal{U}_d} U^{\otimes N} \, T \big{(} U^* \rho \: U \big{)} \, {\big{(} U^* \big{)}}^{\otimes N} \mathrm{d}U,
    \end{equation*}
    where the integral is taken with respected to the normalized Haar measure on the unitary group~$\mathcal{U}_d$.
\end{definition}
An important consequence is that for all~$U$ in~$\mathcal{U}_d$,~$\widetilde{T}(U^* \cdot \, U) = {\big{(} U^* \big{)}}^{\otimes N} \; \widetilde{T}(\cdot) \; U^{\otimes N}$

The symmetrization operation above, commonly called \emph{twirling}, is used as a primitive in a multitude of protocol in quantum information theory \cite{werner1998optimal,keyl1999optimal}. Note that the marginals of a symmetrized quantum channel are the symmetrized versions of the the marginals
\begin{equation*}
    \widetilde{T}_i(\rho) = \int_{\mathcal{U}_d} U \, T_i \big{(} U^* \rho \: U \big{)} \, U^* \mathrm{d}U.
\end{equation*}
It is known that the quantum fidelity (in its general incarnation, defined for arbitrary mixed states)is jointly concave \cite[Exercise 9.19]{nielsen2010quantum}, that is given two families of quantum states~$(\rho_i)_{i \in [n]}$ and~$(\sigma_i)_{i \in [n]}$, and a probability distribution~$(\pi_i)_{i \in [n]}$
\begin{equation*}
    F \bigg{(} \sum^n_{i = 0} \pi_i \, \rho_i , \sum^n_{i = 0} \pi_i \sigma_i \bigg{)} \geq \sum^n_{i = 0} \pi_i \cdot F(\rho_i , \sigma_i).
\end{equation*}
For any quantum cloning map~$T$, we have thus~$F \big{(} \rho , \widetilde{T}_i(\rho) \big{)} \geq F \big{(} \rho , T_i(\rho) \big{)}$. Therefore the set of admissible quantum cloning maps can be restricted to the symmetrized quantum channels, since by doing so, fidelities can only increase. In the following we consider only symmetrized quantum channels.

The \emph{Choi matrix}~$C_T$ of a linear map~$T : \mathcal{M}_d \to \mathcal{M}_{d^{\prime}}$ is defined by \cite[Section 2.2.2]{watrous2018theory}
\begin{align*}
    C_T &= (\id_d \otimes \, T) \bigg{(} \sum^d_{i , j = 1} \ketbra{i}{j} \otimes \ketbra{i}{j} \bigg{)} \\
    &= d \: (\id_d \otimes \, T) \ketbra{\Omega}{\Omega},
\end{align*}
where~$\ket{\Omega}$ is  the maximally entangled state. One can recover the linear map~$T$ from the Choi matrix~$C_T$ by the formula
\begin{equation*}
    T(X) = \Tr_d \big{[} C_T (X^T \otimes I_{d^{\prime}}) \big{]}.
\end{equation*}
The Choi matrices are used to classify the \textsc{cptp} linear maps with the following theorem.
\begin{theorem}[\cite{choi1975completely}] \label{ChoiTheorem:1}
    A linear map~$T : \mathcal{M}_d \to \mathcal{M}_{d^{\prime}}$ is a quantum channel if and only if its Choi matrix~$C_T$ is positive semi-definite and~$\Tr_{d^{\prime}} \big{[} C_T \big{]} = I_d$.
\end{theorem}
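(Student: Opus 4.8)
The plan is to turn the existence of a valid cloner into a small semidefinite feasibility problem and then read off the feasible region directly in the $(s,t)$ plane. First I would record the elementary identity relating the two figures of merit. Because the covariance of a symmetrized channel $\widetilde T$ forces each marginal to be a depolarizing map $\widetilde T_i(\rho) = p_i \rho + (1-p_i)\tfrac{I_d}{d}$, the fidelity $F(\rho,\widetilde T_i(\rho)) = p_i + (1-p_i)/d$ is independent of the pure state $\rho$, so $f_i = \tfrac{(d-1)p_i + 1}{d}$ and hence $p_i = \tfrac{d f_i - 1}{d-1}$. Substituting into $s=p_1+p_2$, $t=p_1-p_2$ reproduces exactly the two displayed parametrizations, showing that the fidelity region and the singlet-fraction region coincide as the same convex set up to this affine change. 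It therefore suffices to describe the achievable pairs $(p_1,p_2)$, and by the jointly-concave-fidelity/symmetrization argument already established, I may restrict to covariant channels $\widetilde T$.

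Next I would exploit covariance together with Schur--Weyl duality. Up to the partial transpose on the input leg coming from the Choi--Jamio\l{}kowski isomorphism, the Choi matrix $C_{\widetilde T}$ lies in the commutant of $U^{\otimes 3}$ on $(\mathbb{C}^d)^{\otimes 3}$, which is spanned by the six permutation operators indexed by $S_3$. Using the graphical calculus of Section \ref{sec:graphical} and the explicit operators computed in Section \ref{sec:math-clone}, I would write $C_{\widetilde T}$ as a real linear combination of these operators (equivalently of their partial transposes) with coefficients $c_\sigma$, and then express both the trace-preservation condition of Theorem \ref{ChoiTheorem:1} (the partial trace of $C_{\widetilde T}$ over the two output copies equals $I_d$) and the marginal parameters $p_1,p_2$ as explicit linear functions of the $c_\sigma$.

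The decisive step is to block-diagonalize $C_{\widetilde T}$ through the isotypic decomposition of $(\mathbb{C}^d)^{\otimes 3}$ into its symmetric, antisymmetric, and doubly-occurring mixed-symmetry sectors. In this basis $C_{\widetilde T}\ge 0$ becomes positivity of a few blocks: scalar blocks on the symmetric and antisymmetric sectors, giving linear inequalities, and a single $2\times 2$ block on the mixed-symmetry sector. After eliminating the redundant coefficients via the trace constraint, one residual scalar degree of freedom — the weight carried by a fixed sector, which the trace condition constrains to lie in $[0,d]$ — plays the role of $\lambda$, while $p_1,p_2$ fix the rest. I expect positivity of the $2\times 2$ block to reduce to a determinant inequality quadratic in $(s,t)$, and the main computational labor will be to verify that, after inserting the parametrization and the constants $a,b$, this determinant condition is exactly $\tfrac{t^2}{a^2} + \tfrac{(s - \frac{\lambda d - 2}{d^2-1})^2}{b^2} \le \lambda^2$.

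Finally I would take the union over $\lambda \in [0,d]$ to obtain the unrestricted region of Theorem \ref{thm:unrestricted}, and identify the extreme value $\lambda=d$ (the largest sector weight compatible with positivity and the trace constraint) with the single ellipse of Theorem \ref{thm:restricted}; one then checks that the restricted family of permutation operators pins $\lambda=d$, recovering the optimal ellipse. The step I expect to be the main obstacle is the bookkeeping of the partial transpose when passing from the channel's covariance to the commutant picture: getting the six (transposed) permutation operators, their multiplication and Gram relations, and the projections onto the three sectors exactly right, since any sign or normalization slip there would distort the $2\times 2$ block and hence the shape and placement of the ellipses. The graphical tensor calculus is precisely the tool I would lean on to carry out this bookkeeping reliably.
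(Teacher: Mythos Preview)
Your proposal does not address the statement you were asked to prove. The statement labeled \texttt{ChoiTheorem:1} is Choi's characterization of quantum channels: $T$ is a channel if and only if $C_T \ge 0$ and $\Tr_{d'}[C_T] = I_d$. The paper does not prove this result; it is quoted with a citation to \cite{choi1975completely} and then used as a tool throughout Sections \ref{sec:math-clone}--\ref{sec:general}. What you have written is instead a sketch of the proof of the main cloning results (Theorems \ref{thm:restricted} and \ref{thm:unrestricted}), in which Choi's theorem appears only as an ingredient (you yourself invoke ``the trace-preservation condition of Theorem \ref{ChoiTheorem:1}'').

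If your intention was to prove Choi's theorem itself, none of the steps you outline --- covariance, Schur--Weyl, isotypic block-diagonalization, the $\lambda$-family of ellipses --- are relevant; that theorem is a general statement about arbitrary linear maps $\mathcal M_d \to \mathcal M_{d'}$ and has nothing to do with permutation operators or cloning. A proof would instead proceed along the standard lines: complete positivity of $T$ is equivalent to $C_T = (\id \otimes T)(d\,\ketbra{\Omega}{\Omega}) \ge 0$ because $d\,\ketbra{\Omega}{\Omega} \ge 0$ and, conversely, any Kraus decomposition of $C_T$ yields one for $T$; the trace-preservation of $T$ is equivalent to $\Tr_{d'}[C_T] = I_d$ by a direct partial-trace computation. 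If instead you meant to prove the cloning theorems, your outline is broadly aligned with the paper's own argument in Sections \ref{sec:restricted} and \ref{sec:general}, but you should resubmit it against the correct statement.
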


For all~$U$ in~$\mathcal{U}_d$, we have~$(\bar{U} \otimes U) \ket{\Omega} = \ket{\Omega}$. Let~$\widetilde{T} : \mathcal{M}_d \to \mathcal{M}^{\otimes N}_d$ be a symmetrized quantum channel, then
\begin{align*}
    C_{\widetilde{T}} \big{(} \bar{U} \otimes U^{\otimes N} \big{)} &= \Big{[} \big{(} \id \otimes \, \widetilde{T} \big{)} \ketbra{\Omega}{\Omega} \Big{]} \big{(} \bar{U} \otimes U^{\otimes N} \big{)} \\
    &= \Big{[} \big{(} \id \otimes \, \widetilde{T} \big{)} \big{(} \bar{U} \otimes U \big{)} \ketbra{\Omega}{\Omega} \big{(} U^T \otimes U^* \big{)} \Big{]} \big{(} \bar{U} \otimes U^{\otimes N} \big{)} \\
    &= \big{(} \bar{U} \otimes U^{\otimes N} \big{)} \Big{[} \big{(} \id \otimes \, \widetilde{T} \big{)} \ketbra{\Omega}{\Omega} \Big{]} \Big{(} U^T \otimes {\big{(} U^* \big{)}}^{\otimes N} \Big{)} \big{(} \bar{U} \otimes U^{\otimes N} \big{)} \\
    &= \big{(} \bar{U} \otimes U^{\otimes N} \big{)} C_{\widetilde{T}},
\end{align*}
where the third equation comes from the following property: for all~$U$ in~$\mathcal{U}_d$,~$\widetilde{T}(U^* \cdot \, U) = {\big{(} U^* \big{)}}^{\otimes N} \; \widetilde{T}(\cdot) \; U^{\otimes N}$. Therefore~$\big{[} C_{\widetilde{T}} , \bar{U} \otimes U^{\otimes N} \big{]} = 0$, which is equivalent to~$\big{[} C^{\Gamma}_{\widetilde{T}} , U \otimes U^{\otimes N} \big{]} = 0$ where~${(\cdot)}^{\Gamma}$ is the partial transpose on the first~$\mathcal{D}_d$ space, i.e.~${(v_1 \otimes v_2 \otimes \cdots \otimes v_N)}^{\Gamma} = v^T_1 \otimes v_2 \otimes \cdots \otimes v_N$.
\begin{theorem}[Schur–Weyl duality \cite{weyl2016classical}]
    Let~$L : \mathcal{H}^{\otimes n} \to \mathcal{H}^{\otimes n}$ be a linear map such that~$L$ commutes with~$U^{\otimes n}$, for all~$U$ in~$\mathcal{U}_d$, then~$L$ is a linear combination of \emph{permutation operators}~$V(\pi)$
    \begin{equation*}
        L = \sum_{\pi \in \mathfrak{S}_n} \alpha_\pi \cdot V(\pi)
    \end{equation*}
    where~$\mathfrak{S}_n$ is the symmetric group on~$n$ elements, the coefficients $(\alpha_\pi)_{\pi \in \mathfrak{S}_n}$ are complex numbers, and~$V(\pi)$ is the representation of~$\mathfrak{S}_n$, acting on~$\mathcal{H}^{\otimes n}$, defined by
    \begin{equation*}
        V(\pi)(v_1 \otimes \cdots \otimes v_n) = v_{\pi^{\shortminus 1}(1)} \otimes \cdots \otimes v_{\pi^{\shortminus 1}(n)}
    \end{equation*}
\end{theorem}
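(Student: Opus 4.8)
The plan is to recognize the statement as the nontrivial half of the double-commutant form of Schur--Weyl duality and to prove it by computing two commutants. For a subset $\mathcal{S} \subseteq \mathcal{B}(\mathcal{H}^{\otimes n})$ write $\mathcal{S}' = \{L : LX = XL \text{ for all } X \in \mathcal{S}\}$ for its commutant, and set
$$\mathcal{A} = \Span\{U^{\otimes n} : U \in \mathcal{U}_d\}, \qquad \mathcal{P} = \Span\{V(\pi) : \pi \in \mathfrak{S}_n\}.$$
The hypothesis on $L$ is precisely that $L \in \mathcal{A}'$, and the desired conclusion is $L \in \mathcal{P}$; so it suffices to prove $\mathcal{A}' = \mathcal{P}$. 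Now $\mathcal{P}$ is a unital $*$-subalgebra of $\mathcal{B}(\mathcal{H}^{\otimes n})$, since $V(\mathrm{id}) = I$, $V(\pi)V(\sigma) = V(\pi \sigma)$, and each $V(\pi)$ is unitary with $V(\pi)^* = V(\pi^{-1})$. As $\mathcal{B}(\mathcal{H}^{\otimes n})$ is finite-dimensional, von Neumann's double commutant theorem gives $\mathcal{P}'' = \mathcal{P}$. Hence it is enough to establish the single identity $\mathcal{A} = \mathcal{P}'$, for then $\mathcal{A}' = \mathcal{P}'' = \mathcal{P}$.

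First I would compute $\mathcal{P}'$. Under the canonical identification $\mathcal{B}(\mathcal{H}^{\otimes n}) \cong \mathcal{M}_d^{\otimes n}$, conjugation by a permutation operator acts by permuting the tensor factors, $V(\pi)(A_1 \otimes \cdots \otimes A_n)V(\pi)^{-1} = A_{\pi^{-1}(1)} \otimes \cdots \otimes A_{\pi^{-1}(n)}$. Consequently $L \in \mathcal{P}'$ if and only if $L$ is invariant under all these factor permutations, that is, $\mathcal{P}'$ is exactly the symmetric subspace $\mathrm{Sym}^n(\mathcal{M}_d) \subseteq \mathcal{M}_d^{\otimes n}$.

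It then remains to prove $\mathcal{A} = \mathrm{Sym}^n(\mathcal{M}_d)$. The inclusion $\mathcal{A} \subseteq \mathrm{Sym}^n(\mathcal{M}_d)$ is immediate, as each $U^{\otimes n}$ is symmetric. For the reverse inclusion I would use two facts: the unitaries linearly span $\mathcal{M}_d$, and, over an infinite field, the pure powers $\{w^{\otimes n} : w \in W\}$ span $\mathrm{Sym}^n(W)$ for every vector space $W$. The latter is the polarization identity, which expresses a symmetrized product $A_1 \odot \cdots \odot A_n$ as a signed combination of powers $\big( \sum_{i \in S} A_i \big)^{\otimes n}$ with $S \subseteq \{1, \dots, n\}$. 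Taking $W = \mathcal{M}_d$ and the $A_i$ unitary shows $\mathrm{Sym}^n(\mathcal{M}_d) \subseteq \Span\{U^{\otimes n}\} = \mathcal{A}$, whence $\mathcal{A} = \mathrm{Sym}^n(\mathcal{M}_d) = \mathcal{P}'$ and the theorem follows.

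I expect the polarization identity to be the main technical point to state cleanly; everything else is either a direct computation (the conjugation action of $V(\pi)$ on $\mathcal{M}_d^{\otimes n}$) or a citation of the double commutant theorem. I note that the converse inclusion $\mathcal{P} \subseteq \mathcal{A}'$, i.e.\ that permutation operators commute with every $U^{\otimes n}$, is the trivial half and is subsumed by the equalities above; in the same breath, the identity $\mathcal{A} = \mathrm{Sym}^n(\mathcal{M}_d)$ records that commuting with all $U^{\otimes n}$ is equivalent to commuting with all $M^{\otimes n}$ for $M \in \mathcal{M}_d$.
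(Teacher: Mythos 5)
The paper itself offers no proof of this statement: it is quoted as classical Schur--Weyl duality with a citation to Weyl and used as a black box, so your attempt has to be judged on its own terms. Your overall strategy --- reduce $\mathcal{A}'=\mathcal{P}$ to the single identity $\mathcal{A}=\mathcal{P}'$ by applying the double commutant theorem to the unital $*$-algebra $\mathcal{P}$, and identify $\mathcal{P}'$ with the symmetric subspace $\mathrm{Sym}^n(\mathcal{M}_d)\subseteq\mathcal{M}_d^{\otimes n}$ via the conjugation action of $V(\pi)$ --- is sound and is one of the standard routes to the theorem.

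The gap is in the final step, the inclusion $\mathrm{Sym}^n(\mathcal{M}_d)\subseteq\Span\{U^{\otimes n}:U\in\mathcal{U}_d\}$. Polarization expresses $A_1\odot\cdots\odot A_n$ as a signed combination of $\bigl(\sum_{i\in S}A_i\bigr)^{\otimes n}$, but a sum of unitaries is not unitary, so taking the $A_i$ unitary only yields $\mathrm{Sym}^n(\mathcal{M}_d)=\Span\{M^{\otimes n}:M\in\mathcal{M}_d\}$; it says nothing about the smaller family $\{U^{\otimes n}:U\in\mathcal{U}_d\}$. (That pure powers of a spanning \emph{subset} need not span the symmetric power is already visible for $W=\mathbb{C}^2$, $n=2$: the vectors $e_1^{\otimes 2}$ and $e_2^{\otimes 2}$ miss $e_1\odot e_2$.) Closing the gap requires a genuinely new ingredient. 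For instance, the unitarian trick: if a linear functional $\phi$ annihilates every $U^{\otimes n}$, then $M\mapsto\phi(M^{\otimes n})$ is a holomorphic polynomial vanishing on $\mathcal{U}_d$, which is a real form of $GL_d(\mathbb{C})$, hence the polynomial vanishes identically. Alternatively, for Hermitian $H$ the entire function $z\mapsto(e^{zH})^{\otimes n}$ takes values in the closed subspace $\mathcal{A}$ for $z\in i\mathbb{R}$ and hence for all $z$ by the identity theorem; combined with the polar decomposition and the fact that $\mathcal{A}$ is an algebra, this gives $M^{\otimes n}\in\mathcal{A}$ for every $M$. With either patch inserted at this point, the rest of your argument goes through.
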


Taking the partial transpose on the right hand side leads us to the following characterization the Choi matrices of the symmetrized quantum channels.
\begin{lemma} \label{SchurWeylLemma:1}
    The Choi matrix~$C_{\widetilde{T}}$ of a symmetrized quantum channel~$\widetilde{T} : \mathcal{M}_d \to \mathcal{M}^{\otimes N}_d$ is a linear combination of partially transposed permutation operators, i.e.
    \begin{equation*}
        C_{\widetilde{T}} = \sum_{\pi \in \mathfrak{S}_{N+1}} \alpha_\pi \cdot V^{\Gamma}(\pi),
    \end{equation*}
where the partial transposition operator acts on the first of the $N+1$ copies of $\mathcal M_d$. 
\end{lemma}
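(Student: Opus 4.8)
The plan is to invoke the Schur–Weyl duality theorem stated above, applied to the partially transposed Choi matrix, and then to undo the partial transpose; the computation immediately preceding the lemma already supplies the key commutation relation, so the argument is essentially a one-step application of results we may assume. First I would recall what was just derived: for the symmetrized channel~$\widetilde{T}$ one has~$\big{[} C^{\Gamma}_{\widetilde{T}} , U \otimes U^{\otimes N} \big{]} = 0$ for every~$U \in \mathcal{U}_d$. The crucial observation is that~$U \otimes U^{\otimes N} = U^{\otimes (N+1)}$ acting on~$\mathcal{H}^{\otimes (N+1)}$, where the~$N+1$ tensor factors are the single input copy together with the~$N$ output copies. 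Thus~$C^{\Gamma}_{\widetilde{T}}$ is a linear operator on~$\mathcal{H}^{\otimes (N+1)}$ that commutes with~$U^{\otimes (N+1)}$ for all unitaries~$U$.

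Next I would apply Schur–Weyl duality with~$n = N+1$: the theorem furnishes coefficients~$(\alpha_\pi)_{\pi \in \mathfrak{S}_{N+1}}$ such that
\begin{equation*}
    C^{\Gamma}_{\widetilde{T}} = \sum_{\pi \in \mathfrak{S}_{N+1}} \alpha_\pi \cdot V(\pi).
\end{equation*}
Finally, I would take the partial transpose on the first factor of both sides. Since the partial transpose is a linear involution, that is~$(X^{\Gamma})^{\Gamma} = X$, applying~$\Gamma$ recovers~$C_{\widetilde{T}}$ on the left and distributes termwise over the sum on the right, yielding~$C_{\widetilde{T}} = \sum_{\pi} \alpha_\pi \cdot V^{\Gamma}(\pi)$, which is exactly the asserted characterization.

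I do not anticipate a genuine obstacle here: the only point requiring care is the bookkeeping that the partial transposition~$\Gamma$ acts on the same (first) tensor factor throughout, so that the involution property may be applied termwise and the~$\Gamma$ appearing on~$V(\pi)$ coincides with the one in the statement. Everything else follows from linearity of~$\Gamma$ together with the two quoted theorems, so the content of the lemma is really the correct identification of the~$N+1$ relevant copies and the passage between the commutation relations~$\big{[} C_{\widetilde{T}} , \bar{U} \otimes U^{\otimes N} \big{]} = 0$ and~$\big{[} C^{\Gamma}_{\widetilde{T}} , U^{\otimes (N+1)} \big{]} = 0$, which the preceding display has already carried out.
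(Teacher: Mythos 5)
Your proposal is correct and follows exactly the paper's own (largely implicit) argument: use the commutation relation $\big[ C^{\Gamma}_{\widetilde{T}} , U^{\otimes (N+1)} \big] = 0$ established just before the lemma, apply Schur--Weyl duality with $n = N+1$ to write $C^{\Gamma}_{\widetilde{T}}$ as a linear combination of permutation operators, and undo the partial transpose termwise using that $\Gamma$ is a linear involution on the first tensor factor. No gaps; this matches the paper's route step for step.
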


Note that conversely, a linear combination of partially transposed permutation operators is a Choi matrix of a quantum channel, and hence a quantum cloning map, if the conditions of the Choi's Theorem \ref{ChoiTheorem:1} hold. 

In the particular case of the marginals of~$\widetilde{T}$, the formula above is very simple: there  are only two permutations in $\mathfrak S_2$, and their partial transpositions correspond to the identity, resp.~the maximally depolarizing channels. Hence, the marginals of a cloning map $\tilde T$ are of the form
$$\widetilde{T}_i(\rho) = p_i \, \rho + (1 - p_i) \frac{I}{d},$$
for some~$(p_i)_{i \in [N]}$ that do not depend on~$\rho$. The quantum fidelity of the marginals on any pure quantum state~$\rho = \ketbra{\psi}{\psi}$ are
\begin{align*}
    F \big{(} \rho , \tilde T_i(\rho) \big{)} &= \bra{\psi} \tilde T_i(\rho) \ket{\psi} \\
    &= \Big{\langle} \psi \Big{|} p_i \, \rho + \frac{1 - p_i}{d} I \Big{|} \psi \Big{\rangle} \\
    &= p_i + \frac{1 - p_i}{d}:=f_i
\end{align*}
Since the quantum fidelity does not depend on the choice of state, it follows that in the case of the marginals of a cloning map, the average and worst quantum fidelity coincide. We have thus proven the following important result, reducing the harder problem of the worse case fidelity to the one of the simpler, average fidelity. 

\begin{theorem} \label{ProblemsEquivalenceTheorem:1}
    The worst and the average quantum cloning problems are equivalent: for all~$s$ in~$[0 , 1]^N$,~$\sup_T F_s(T) = \sup_T \bar{F}_s(T)$, where both supremums are taken over the set of quantum channels $T:\mathcal M_d \to \mathcal M_d^{\otimes N}$.
\end{theorem}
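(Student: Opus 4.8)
The plan is to establish the claimed equality of suprema by proving two opposite inequalities, one of which is essentially trivial and the other of which exploits the symmetrization machinery developed just above.

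First I would record the easy direction. For any quantum channel $T$ and any marginal index $i$, the infimum over pure states is bounded above by the average over pure states, $\inf_\rho F(\rho, T_i(\rho)) \leq \mathbb{E}_\rho F(\rho, T_i(\rho))$. Since each weight $s_i$ is nonnegative, summing these inequalities gives $F_s(T) \leq \bar{F}_s(T)$ for every $T$, and taking the supremum over $T$ yields $\sup_T F_s(T) \leq \sup_T \bar{F}_s(T)$.

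For the reverse inequality I would pass to symmetrized channels. By joint concavity of the fidelity --- which gives $F(\rho, \widetilde{T}_i(\rho)) \geq F(\rho, T_i(\rho))$ pointwise in $\rho$, as recalled just above --- symmetrization can only increase the pointwise fidelity, hence also its infimum and its expectation over $\rho$. Consequently the supremum of $\bar{F}_s$ over all channels is unchanged if we restrict to symmetrized channels $\widetilde{T}$, and similarly for $F_s$. Now comes the crucial observation: for a symmetrized channel the marginal has the form $\widetilde{T}_i(\rho) = p_i\,\rho + (1-p_i)\tfrac{I}{d}$ with $p_i$ independent of $\rho$, so the fidelity $F(\rho, \widetilde{T}_i(\rho)) = f_i$ is a constant function of the state. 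For such a channel the infimum and the expectation over $\rho$ therefore coincide, whence $F_s(\widetilde{T}) = \bar{F}_s(\widetilde{T})$. Chaining these facts gives
\[
\sup_T \bar{F}_s(T) = \sup_{\widetilde{T}} \bar{F}_s(\widetilde{T}) = \sup_{\widetilde{T}} F_s(\widetilde{T}) \leq \sup_T F_s(T),
\]
and combined with the trivial direction this proves the equality.

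There is almost no genuine obstacle here, since the heavy lifting was carried out in the preceding discussion; the proof is a matter of assembling three already-established facts in the right order. The only point requiring a moment's care is the interchange between optimizing over all channels and over symmetrized ones: one must note that each symmetrized channel is itself a legitimate channel (so restricting cannot raise the supremum) while every channel is dominated by its own symmetrization (so restricting cannot lower it either), which is exactly what forces the two suprema to agree. Once the state-independence of the symmetrized fidelity collapses the infimum onto the expectation, the equivalence of the worst-case and average-case problems is immediate.
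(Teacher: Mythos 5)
Your proposal is correct and follows essentially the same route as the paper: the easy inequality $F_s(T)\leq \bar F_s(T)$, then symmetrization via joint concavity of the fidelity, and the observation that the marginals of a twirled channel are depolarizing so that the fidelity is state-independent and the infimum and average coincide. The paper leaves the final chain of suprema implicit, whereas you spell it out, but the content is identical.
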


\section{Graphical calculus}\label{sec:graphical}
In order to compute the aforementioned average (or, equivalently, worst-case) fidelities for quantum cloners, we introduce a graphical calculus for tensors, derived from the Penrose graphical notation \cite{penrose1971applications}. Similar calculi have been developed more recently in the framework of tensor network states or categorical quantum information theory can be found in \cite{wood2015tensor,bridgeman2017hand,coecke2017picturing}. 

In this diagrammatic notation, tensors are represented by \emph{boxes} and \emph{wires}
\begin{center}
    \begin{tikzpicture}[box/.style = {rectangle,
                                      draw = black,
                                      fill = blue!15!white,
                                      thick}]
        \node[box] (1)                      {$T$};
        \node[]    (2) [above left of = 1]  {};
        \node[]    (3) [above right of = 1] {};
        \node[]    (4) [below left of = 1]  {};
        \node[]    (5) [below right of = 1] {};
        \draw[-] (1.north west) to (2);
        \draw[-] (1.north east) to (3);
        \draw[-] (1.south west) to (4);
        \draw[-] (1.south east) to (5);
    \end{tikzpicture}
\end{center}
More specifically, the wires are labeled by \emph{indices}, such that a box represents the value of the tensor at the given indices
\begin{center}
 $T_{ij} = $ \begin{tikzpicture}[baseline = {([yshift = -1.5ex] current bounding box.center)},
                                 box/.style = {rectangle,
                                      draw = black,
                                      fill = blue!15!white,
                                      thick}]
        \node[box] (1)                {$T$};
        \node[]    (2) [left of = 1]  {};
        \node[]    (3) [right of = 1] {};
        \draw[-] (1) to node [midway, above] {$i$} (2);
        \draw[-] (1) to node [midway, above] {$j$} (3);
    \end{tikzpicture}   
\end{center}
In particular, a vector is a box with only~$1$ wire pointing to the left, and labeled by the index of the coordinate. A dual vector has its wire pointing to the right
\begin{center}
    \hfill
    $v_i =$ \begin{tikzpicture}[baseline = {([yshift = -1.5ex] current bounding box.center)},
                                box/.style = {rectangle,
                                      draw = black,
                                      fill = blue!15!white,
                                      thick}]
        \node[box] (1)               {$v$};
        \node[]    (2) [left of = 1] {};
        \draw[-] (1) to node [midway, above] {$i$} (2);
    \end{tikzpicture}
    \hfill
    $\bar v_i=$ \begin{tikzpicture}[baseline = {([yshift = -1.5ex] current bounding box.center)},
                                    box/.style = {rectangle,
                                      draw = black,
                                      fill = blue!15!white,
                                      thick}]
        \node[box] (1)                {$v^*$};
        \node[]    (2) [right of = 1] {};
        \draw[-] (1) to node [midway, above] {$i$} (2);
    \end{tikzpicture}
    \hfill\null
\end{center}
Note that these left and right directions are just a convention corresponding to the usual \emph{right-to-left} composition in linear algebra.

The tensor diagrams can be combined in two ways: the \emph{tensor product} and the \emph{tensor contraction}. The tensor product combines two diagrams vertically or horizontally
\begin{center}
    \hfill
    \begin{tikzpicture}[baseline = {([yshift = -0.5ex] current bounding box.center)},
                        box/.style = {rectangle,
                                      draw = black,
                                      fill = blue!15!white,
                                      thick}]
        \node[box] (1)                               {$A \otimes B$};
        \node[]    (2) [left of = 1, yshift = 1ex]   {};
        \node[]    (3) [left of = 1, yshift = -1ex]  {};
        \node[]    (4) [right of = 1, yshift = 1ex]  {};
        \node[]    (5) [right of = 1, yshift = -1ex] {};
        \draw[-] ([yshift = 1ex]1.west)  to (2);
        \draw[-] ([yshift = -1ex]1.west) to (3);
        \draw[-] ([yshift = 1ex]1.east)  to (4);
        \draw[-] ([yshift = -1ex]1.east) to (5);
    \end{tikzpicture}
    $=$
    \begin{tikzpicture}[baseline = {([yshift = -0.5ex] current bounding box.center)},
                        box/.style = {rectangle,
                                      draw = black,
                                      fill = blue!15!white,
                                      thick}]
        \node[box] (1)                {$A$};
        \node[]    (2) [left of = 1]  {};
        \node[]    (3) [right of = 1] {};
        \node[box] (4) [below of = 1] {$B$};
        \node[]    (5) [left of = 4]  {};
        \node[]    (6) [right of = 4] {};
        \draw[-] (1) to (2);
        \draw[-] (1) to (3);
        \draw[-] (4) to (5);
        \draw[-] (4) to (6);
    \end{tikzpicture}
    \hfill
    \begin{tikzpicture}[baseline = {([yshift = -0.5ex] current bounding box.center)},
                        box/.style = {rectangle,
                                      draw = black,
                                      fill = blue!15!white,
                                      thick}]
        \node[box] (1)                {$u v^*$};
        \node[]    (2) [left of = 1]  {};
        \node[]    (3) [right of = 1] {};
        \draw[-] (1) to (2);
        \draw[-] (1) to (3);
    \end{tikzpicture}
    $=$
    \begin{tikzpicture}[baseline = {([yshift = -0.5ex] current bounding box.center)},
                        box/.style = {rectangle,
                                      draw = black,
                                      fill = blue!15!white,
                                      thick}]
        \node[box] (1)               {$u$};
        \node[]    (2) [left of = 1] {};
        \node[box] (3) [right of = 1] {$v$};
        \node[]    (4) [right of = 3] {};
        \draw[-] (1) to (2);
        \draw[-] (3) to (4);
    \end{tikzpicture}
    \hfill\null
\end{center}
The tensor contraction combines two diagrams 
by taking the sum over all common indices
\begin{center}
    \begin{tikzpicture}[box/.style = {rectangle,
                                      draw = black,
                                      fill = blue!15!white,
                                      thick}]
        \node[box] (1)                              {$A$};
        \node[]    (2) [above left of = 1]          {};
        \node[]    (3) [below left of = 1]          {};
        \node[box] (4) [right of = 1, xshift = 5ex] {$B$};
        \node[]    (5) [above right of = 4]         {};
        \node[]    (6) [below right of = 4]         {};
        \draw[-] (1.north west) to                                 (2);
        \draw[-] (1.south west) to                                 (3);
        \draw[-] (1)            to node [midway, above] {$\sum_ k$} (4);
        \draw[-] (4.north east) to                                 (5);
        \draw[-] (4.south east) to                                 (6);
    \end{tikzpicture}
\end{center}
In particular, we recover the scalar product
\begin{center}
$\langle u, v\rangle = $
    \begin{tikzpicture}[baseline = {([yshift = -0.5ex] current bounding box.center)},
                        box/.style = {rectangle,
                                      draw = black,
                                      fill = blue!15!white,
                                      thick}]
        \node[box] (1)                {$u$};
        \node[box] (2) [right of = 1] {$v$};
        \draw[-] (1) to (2);
    \end{tikzpicture}
    $\: = \: \sum_k u_k \cdot v_k,$
\end{center}
the matrix product
\begin{center}
$(AB)_{ij} = $
    \begin{tikzpicture}[baseline = {([yshift = -1.5ex] current bounding box.center)},
                        box/.style = {rectangle,
                                      draw = black,
                                      fill = blue!15!white,
                                      thick}]
        \node[box] (1)                {$A$};
        \node[]    (2) [left of = 1]  {};
        \node[box] (3) [right of = 1] {$B$};
        \node[]    (4) [right of = 3] {};
        \draw[-] (1) to node [midway, above] {$i$} (2);
        \draw[-] (1) to (3);
        \draw[-] (3) to node [midway, above] {$j$} (4);
    \end{tikzpicture}
    $\: = \: \sum_k a_{ik} \cdot b_{kj},$
\end{center}
and the matrix trace
\begin{center}
$\Tr A = $
    \begin{tikzpicture}[baseline = {([yshift = -1.5ex] current bounding box.center)},
                        box/.style = {rectangle,
                                      draw = black,
                                      fill = blue!15!white,
                                      thick}]
        \node[box] (1) {$A$};
        \coordinate[left of = 1, xshift = 3ex]   (2);
        \coordinate[right of = 1, xshift = -3ex] (3);
        \coordinate[above of = 2, yshift = -3ex] (4);
        \coordinate[above of = 3, yshift = -3ex] (5);
        \draw[-] (1.west) to                   (2);
        \draw[-] (1.east) to                   (3);
        \draw[-] (2)      to [bend left = 90]  (4);
        \draw[-] (3)      to [bend right = 90] (5);
        \draw[-] (4)      to                   (5);
    \end{tikzpicture}
    $\: = \: \sum_k a_{kk}.$
\end{center}
Note that scalars multiply the diagrams and are depicted next to the diagram. We shall use the following three special tensors, of foremost importance in quantum information theory, which have wire-only diagrams: 
\begin{center}
    \hfill
    \begin{tikzpicture}[baseline = {([yshift = -0.5ex] current bounding box.center)}]
        \node[] (1)                {};
        \node[] (2) [right of = 1] {};
        \draw[-] (1) to (2);
    \end{tikzpicture}
    $= \: I$
    \hfill
    \begin{tikzpicture}[baseline = {([yshift = -0.5ex] current bounding box.center)}]
        \node[] (1) {$\frac{1}{\sqrt{d}}$};
        \coordinate[above of = 1, xshift = 3ex, yshift = -3ex] (2);
        \coordinate[below of = 1, xshift = 3ex, yshift = 3ex]  (3);
        \draw[-] (2) to [bend left = 90] (3);
    \end{tikzpicture}
    $= \: \ket{\Omega}$
    \hfill
    \begin{tikzpicture}[baseline = {([yshift = -0.5ex] current bounding box.center)}]
        \draw (0,0) circle (0.5);
    \end{tikzpicture}
    $= \: d.$
    \hfill\null
\end{center}
Finally, the transposition can be visually represented by swapping the ``input'' and the ``output'' wires of a box depicting a matrix. The partial transposition permutes just the wires of the corresponding spaces:
\begin{center}
    \begin{tikzpicture}[baseline = {([yshift = -0.5ex] current bounding box.center)},
                        box/.style = {rectangle,
                                      draw = black,
                                      fill = blue!15!white,
                                      thick}]
        \node[box] (1)                               {$A^\Gamma$};
        \node[]    (2) [left of = 1, yshift = 1ex]   {};
        \node[]    (3) [left of = 1, yshift = -1ex]  {};
        \node[]    (4) [right of = 1, yshift = 1ex]  {};
        \node[]    (5) [right of = 1, yshift = -1ex] {};
        \draw[-] ([yshift = 1ex]1.west)  to (2);
        \draw[-] ([yshift = -1ex]1.west) to (3);
        \draw[-] ([yshift = 1ex]1.east)  to (4);
        \draw[-] ([yshift = -1ex]1.east) to (5);
\end{tikzpicture}
$ = $
    \begin{tikzpicture}[baseline = {([yshift = -2ex] current bounding box.center)},
                        box/.style = {rectangle,
                                      draw = black,
                                      fill = blue!15!white,
                                      thick}]
        \node[box] (1)                                {$A$};
        \node[]    (4) [left of = 1, yshift = -1ex]   {};
        \node[]    (5) [right of = 1, yshift = -1ex]  {};
        \node[]    (8) [left of = 1, yshift = 4ex]   {};
        \node[]    (9) [right of = 1, yshift = 4ex]  {};
        \coordinate[left of = 1, xshift = 4ex, yshift = 1ex] (2);
        \coordinate[right of = 1, xshift = -4ex, yshift = 1ex] (3);
        \coordinate[above of = 2, yshift = -4.5ex] (6);
        \coordinate[above of = 3, yshift = -4.5ex] (7);
        \draw[-]                                 ([yshift = 1ex]1.west)  to                   (2);
        \draw[-]                                 ([yshift = 1ex]1.east)  to                   (3);
        \draw[-]                                 ([yshift = -1ex]1.west) to                   (4);
        \draw[-]                                 ([yshift = -1ex]1.east) to                   (5);
        \draw[-]                                 (2)                     to [bend left = 60]  (6);
        \draw[-]                                 (3)                     to [bend right = 60] (7);
        \draw[-]                                 (6)                     to [bend left = 15]  (9);
        \draw[-, line width = 2pt, draw = white] (7)                     to [bend right = 15] (8);
        \draw[-]                                 (7)                     to [bend right = 15] (8);
    \end{tikzpicture}
    $= (\transp \otimes \id ) A = \: A^{\Gamma}$
\end{center}

\section{Mathematical formulation of the cloning problem}\label{sec:math-clone}
According to Lemma \ref{SchurWeylLemma:1}, the partially transposed Choi matrix~$C^{\Gamma}_{\widetilde{T}}$ of a symmetrized quantum channel~$\widetilde{T} : \mathcal{M}_d \to \mathcal{M}_d \otimes \mathcal{M}_d$ is a linear combination of the~$6$ permutation operators of~$\mathfrak{S}_3$, which we represent graphically below
\begin{center}
    \hfill $\underbrace{\R{id}}_{V(\id)}$ \hfill $\underbrace{\R{12}}_{V(1 \, 2)}$ \hfill $\underbrace{\R{13}}_{V(1 \, 3)}$ \hfill $\underbrace{\R{23}}_{V(2 \, 3)}$ \hfill $\underbrace{\R{123}}_{V(1 \, 2 \, 3)}$ \hfill $\underbrace{\R{321}}_{V(3 \, 2 \, 1)}$ \hfill\null
\end{center}
Taking the partial transpose on the top~$\mathcal{M}_d$ space give us
\begin{center}
    \hfill \RT{id} \hfill \RT{12} \hfill \RT{13} \hfill \RT{23} \hfill \RT{123} \hfill \RT{321} \hfill\null
\end{center}
and therefore the Choi matrix~$C_{\widetilde{T}}$ of an arbitrary~$1 \to 2$ symmetrized quantum channel can be written as
\begin{equation*}
    C_{\widetilde{T}} = \alpha_1 \cdot \smallRT{id} + \alpha_2 \cdot \smallRT{12} + \alpha_3 \cdot \smallRT{13} + \alpha_4 \cdot \smallRT{23} + \alpha_5 \cdot \smallRT{123} + \alpha_6 \cdot \smallRT{321}
\end{equation*}
for complex coefficients $\alpha_1, \ldots, \alpha_6 \in \mathbb C$. As a Choi matrix of a quantum channel,~$C_{\widetilde{T}}$ is positive semi-definite and satisfies~$\Tr_{\text{out}} \big{[} C_{\widetilde{T}} \big{]} = I$, where~$\Tr_{\text{out}}$ is the partial trace over the last~$\mathcal{D}_d$ spaces.

By taking the partial traces, we obtain the two marginals of~$\widetilde{T}$
\begin{align*}
    \widetilde{T}_1(\rho) = (\alpha_2 \, d + \alpha_5 + \alpha_6) \rho + (\alpha_1 \, d + \alpha_3 + \alpha_4) \frac{I}{d} \\
    \widetilde{T}_2(\rho) = (\alpha_3 \, d + \alpha_5 + \alpha_6) \rho + (\alpha_1 \, d + \alpha_2 + \alpha_4) \frac{I}{d}
\end{align*}
The partially transposed permutation operators~$\smallRT{123}$ and~$\smallRT{321}$ corresponding to the two cycles~$(1 \, 2 \, 3)$ and~$(3 \, 2 \, 1)$ of~$\mathfrak{S}_3$ contribute, with coefficients~$\alpha_5$ and~$\alpha_6$, to the figures of merit of both~$\widetilde{T}_1$ and~$\widetilde{T}_2$. However, we know by the no-cloning theorem that that a Choi matrix composed only of the two cycles of~$\mathfrak{S}_3$ is not the Choi matrix of a quantum channel, since the sum of $\smallRT{123}$ and $\smallRT{321}$ is not positive semi-definite. Hence, contributions from the other~$4$ operators are needed to ensure the positive semi-definiteness of the Choi matrix~$C_{\widetilde{T}}$. The~$1 \to 2$ asymmetric quantum cloning problems consist in finding the coefficients~${(\alpha_i)}_{i \in [6]}$ that make the Choi matrix~$C_{\widetilde{T}}$ satisfy the conditions of Choi's Theorem \ref{ChoiTheorem:1}. The trace preservation condition implies that
\begin{equation}\label{eq:trace-preservation}
    \alpha_1 \, d^2 + (\alpha_2 + \alpha_3 + \alpha_4) d + \alpha_5 + \alpha_6 = 1.
\end{equation}

We now turn to the problem of describing the fact that~$C_{\widetilde{T}}$ is positive semi-definite in terms of the coefficients~$\alpha_i$. We introduce, for all~$i$ in~$[d]$, the vectors~$u_i := \sqrt d \ket{\Omega}_{(1)} \otimes \ket{i}$ and~$v_i = \sqrt d \ket{\Omega}_{(2)} \otimes \ket{i}$, where~$\ket{\Omega}_{(k)}$ is the un-normalized maximally entangled state between the first and the~$k$-th spaces
\begin{align}
\label{eq:def-ui}    u_i &= \V{1} \\
\label{eq:def-vi}    v_i &= \V{2};
\end{align}
above, the horizontal line should be understood as having the index $i$ above it, selecting the $i$-th coordinate of the tensor slice. Then, the action of the partially transposed permutation operators on these vectors is
\begin{equation} \label{operatorsAction:1}
    \begin{aligned}
        \smallRT{id} \V{1} &= \V{1} & \smallRT{12} \V{1} &= d \cdot \V{1} & \smallRT{13} \V{1} &= \V{2} \\
        \smallRT{23} \V{1} &= \V{2} & \smallRT{123} \V{1} &= d \cdot \V{2} & \smallRT{321} \V{1} &= \V{1}
    \end{aligned}
\end{equation}
and
\begin{equation} \label{operatorsAction:2}
    \begin{aligned}
        \smallRT{id} \V{2} &= \V{2} & \smallRT{12} \V{2} &= \V{1} & \smallRT{13} \V{2} &= d \cdot  \V{2} \\
        \smallRT{23} \V{2} &= \V{1} & \smallRT{123} \V{2} &= \V{2} & \smallRT{321} \V{2} &= d \cdot \V{1}
    \end{aligned}
\end{equation}
The vectors~$u_i + v_i$ and~$u_i - v_i$ are eigenvectors of~$\smallRT{12} + \smallRT{13}$ and~$\smallRT{123} + \smallRT{321}$. Since~$\smallRT{12}$,~$\smallRT{13}$,~$\smallRT{123}$ and~$\smallRT{321}$ are rank-$d$ matrices, we have computed above their complete (nonzero) spectrums:
\begin{align*}
    \spec \bigg{(} \smallRT{12} + \smallRT{13} \bigg{)} &=
        \begin{cases}
            (d + 1)\: \times \: d \\
            (d - 1)\: \times \: d
        \end{cases} \\
    \spec \bigg{(} \smallRT{123} + \smallRT{321} \bigg{)} &=
        \begin{cases}
            \phantom{-}(d + 1)\: \times \: d \\
            -(d - 1)\: \times \: d.
        \end{cases}
\end{align*}
The other two operators, $\smallRT{id}$ and~$\smallRT{23}$, are unitary matrices and thus have full ($d^3$) rank.

In the next two sections we shall characterize the positive semi-definiteness of the matrix $C_{\tilde T}$ in terms of the coefficients $\alpha_i$, first by restricting to the first four, and then considering the general case. 

\section{Restricted quantum cloners}\label{sec:restricted}
We start by solving the~$1 \to 2$ asymmetric quantum cloning problems when the Choi matrix~$C_{\widetilde{T}}$ is a linear combination of only~$4$ partially transposed permutation operators of~$\mathfrak{S}_3$, i.e.
\begin{equation*}
    C_{\widetilde{T}} = \alpha_1 \cdot \smallRT{12} + \alpha_2 \cdot \smallRT{13} + \alpha_3 \cdot \smallRT{123} + \alpha_4 \cdot \smallRT{321}
\end{equation*}
Since we want~$C_{\widetilde{T}}$ to be positive semi-definite and in particular to be hermitian, we have the following relations:~$\alpha_1, \alpha_2 \in \mathbb{R}$ and~$\alpha_3 = \bar{\alpha}_4$. That is
\begin{equation*}
    C_{\widetilde{T}} = \alpha \cdot \smallRT{12} + \beta \cdot \smallRT{13} + \gamma \cdot \smallRT{123} + \bar{\gamma} \cdot \smallRT{321}
\end{equation*}
such that the partial trace condition from Eq.~\eqref{eq:trace-preservation} becomes~$d (\alpha + \beta) + 2 \Re(\gamma) = 1$.
Using equations (\ref{operatorsAction:1}) and (\ref{operatorsAction:2}) we proceed to the block diagonal decomposition of the~$4$ partially transposed permutation operators in the basis of the~$2 \, d$ vectors~$\V{1}$ and~$\V{2}$ from Eqs.~\eqref{eq:def-ui}-\eqref{eq:def-vi}, i.e.
\begin{equation*}
    {\smallRT{12}}_i =
        \begin{blockarray}{ccc}
            u_i & v_i \\
            \begin{block}{(cc)c}
                d & 1 & u_i\\
                0 & 0 & v_i\\
            \end{block}
        \end{blockarray}
\end{equation*}
and
\begin{align*}
    {\smallRT{13}}_i &=
    \begin{pmatrix}
        0 & 0 \\
        1 & d
    \end{pmatrix}
    &
    {\smallRT{123}}_i &=
    \begin{pmatrix}
        0 & 0 \\
        d & 1
    \end{pmatrix}
    &
    {\smallRT{321}}_i &=
    \begin{pmatrix}
        1 & d \\
        0 & 0
    \end{pmatrix}
\end{align*}
such that each partially transposed permutation operators is a direct sum of~$d$ such blocks and~$d^3-2d$ zero blocks, e.g.
\begin{equation*}
    \smallRT{12} = {\begin{pmatrix}  d & 1 \\ 0 & 0 \end{pmatrix}}^{\oplus \, d} \bigoplus \  {\bigzero \,}_{d^3-2d,d^3-2d}.
\end{equation*}
Then the block diagonal decomposition of~$C_{\widetilde{T}}$ becomes
\begin{equation} \label{blockDiagonalization:1}
    {\big{(} C_{\widetilde{T}} \big{)}}_i =
    \begin{pmatrix}
        \alpha \, d + \bar{\gamma} & \alpha + \bar{\gamma} \, d \\
        \beta + \gamma \, d & \beta \, d + \gamma
    \end{pmatrix}
\end{equation}

It is well known that a~$2 \times 2$ hermitian matrix~$M$ is positive semi-definite if and only if~$\det M \geq 0$ and~$\Tr M \geq 0$. Then~$C_{\widetilde{T}}$ is positive semi-definite if and only if each of its~$2 \times 2$ blocks of equation (\ref{blockDiagonalization:1}) are is positive semi-definite. That is
\begin{align*}
    \Tr \Big{[} {\big{(} C_{\widetilde{T}} \big{)}}_i \Big{]} &= d (\alpha + \beta) + 2  \Re(\gamma) \geq 0
\end{align*}
and
\begin{align*}
    \det \Big{[} {\big{(} C_{\widetilde{T}} \big{)}}_i \Big{]} &= \alpha \, \beta - {|\gamma|}^2  \geq 0.
\end{align*}
The first condition is always true since we must have~$d (\alpha + \beta) + 2 \Re(\gamma) = 1$. Finally, the Choi matrix~$C_{\widetilde{T}}$ is the Choi matrix of a quantum channel, and thus a quantum cloning map, when both
\begin{equation} \label{CloningMapCondition:1}
    d (\alpha + \beta) + 2  \Re(\gamma) = 1 \qquad \text{and} \qquad \alpha \, \beta \geq {|\gamma|}^2.
\end{equation}

The two marginals of~$\widetilde{T}$ are
\begin{align*}
    \widetilde{T}_1(\rho) = \underbrace{\big{(} \alpha \, d + 2  \Re(\gamma) \big{)}}_{p_1} \rho + \beta  I \\
    \widetilde{T}_2(\rho) = \underbrace{\big{(} \beta \, d + 2  \Re(\gamma) \big{)}}_{p_2} \rho + \alpha I,
\end{align*}
with~$p_1$ and~$p_2$ being the two figures of merit. With these definitions, the marginals $\widetilde T_i$ satisfy $\widetilde T_i(\rho)=p_i\rho+\left(1-p_i\right)\frac{I_d}{d}$. Hence, the equations (\ref{CloningMapCondition:1}) lead us to the main result of this section, describing the admissible region of singlet fractions $(p_1,p_2)$ for the quantum cloning problem, when the cloner is restricted to the first four permutation operators. 
\begin{theorem}\label{thm:restricted}
    The admissible region of the singlet fractions~$p_1$ and~$p_2$ for the \emph{restricted} universal~$1 \to 2$ asymmetric quantum cloning is the \emph{ellipse} given by:
    \begin{equation*}
      \Bigg \{(p_1, p_2) \in \left[\frac{-1}{d^2-1}, 1\right] \, : \,  \frac{(1 - p_1)(1 - p_2)}{d^2} \geq \bigg( \frac{p_1 + p_2 - 1}{2} \bigg) ^2 \Bigg\}.
      \end{equation*} 
      Equivalently, introducing the new variables $(s,t)=(p_1 + p_2,p_1 - p_2),a=(d^2-1)^{-1/2},b=(d^2-1)^{-1}$, the figures of merit for the the optimal quantum cloners are given by the ellipse
$$ \frac{t^2}{a^2}  +\frac{\left(s - \frac{d^2-2}{d^2-1} \right)^2}{b^2} \leq d^2.$$
    
\end{theorem}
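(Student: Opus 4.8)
The plan is to translate the two requirements in \eqref{CloningMapCondition:1} — the normalization $d(\alpha+\beta)+2\Re(\gamma)=1$ and the positivity $\alpha\beta\geq|\gamma|^2$ — directly into constraints on the figures of merit $(p_1,p_2)$. First I would invert the relations that define the marginals. Reading off $p_1=\alpha\, d+2\Re(\gamma)$ and $p_2=\beta\, d+2\Re(\gamma)$ and combining them with the trace condition, the three real quantities $\alpha$, $\beta$, $\Re(\gamma)$ are uniquely pinned down by $(p_1,p_2)$; solving the linear system gives
$$\alpha = \frac{1 - p_2}{d}, \qquad \beta = \frac{1 - p_1}{d}, \qquad \Re(\gamma) = \frac{p_1 + p_2 - 1}{2}.$$

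Next I would treat $\Im(\gamma)$ as the sole remaining free parameter and ask for which $(p_1,p_2)$ the positivity condition can be met. Writing $|\gamma|^2=\Re(\gamma)^2+\Im(\gamma)^2$, the inequality $\alpha\beta\geq|\gamma|^2$ is satisfiable for \emph{some} $\gamma$ with the prescribed real part if and only if it already holds in the least restrictive case $\Im(\gamma)=0$, i.e.\ $\alpha\beta\geq\Re(\gamma)^2$. Substituting the expressions above yields
$$\frac{(1 - p_1)(1 - p_2)}{d^2} \geq \left(\frac{p_1 + p_2 - 1}{2}\right)^2,$$
which is exactly the first description of the admissible region. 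This feasibility step — recognizing that $\Im(\gamma)=0$ is the optimal choice — is the one genuine idea in the argument; everything that follows is bookkeeping.

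To reach the ellipse form I would pass to the coordinates $s=p_1+p_2$ and $t=p_1-p_2$. Using the identities $4(1-p_1)(1-p_2)=(2-s)^2-t^2$ and $(p_1+p_2-1)^2=(s-1)^2$, the inequality above becomes $t^2\leq(2-s)^2-d^2(s-1)^2$. Completing the square in $s$ on the right-hand side collapses the constant terms to $d^2/(d^2-1)$; multiplying through by $d^2-1$ and inserting $a^2=(d^2-1)^{-1}$, $b^2=(d^2-1)^{-2}$ then produces precisely
$$\frac{t^2}{a^2} + \frac{\left(s - \frac{d^2 - 2}{d^2 - 1}\right)^2}{b^2} \leq d^2.$$

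Finally, I would recover the stated range $p_i\in[-1/(d^2-1),\,1]$ as the orthogonal projection of this ellipse onto each coordinate axis: parametrizing the boundary and extremizing $p_1=(s+t)/2$ returns the values $1$ and $-1/(d^2-1)$, with the identical conclusion for $p_2$ by the $t\mapsto-t$ symmetry, so this range is in fact forced by the inequality rather than an independent constraint. I do not anticipate any real obstacle here: the only place demanding care is the sign bookkeeping in completing the square, which must reproduce both the center $(d^2-2)/(d^2-1)$ and the right-hand value $d^2$ exactly.
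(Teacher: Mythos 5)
Your proposal is correct and follows essentially the same route as the paper's proof: invert the linear relations to get $\alpha=(1-p_2)/d$, $\beta=(1-p_1)/d$, $\Re(\gamma)=(p_1+p_2-1)/2$, observe that feasibility of some $\gamma$ reduces to the case $\Im(\gamma)=0$, and substitute into $\alpha\beta\geq|\gamma|^2$. You merely carry out more explicitly the algebra that the paper leaves implicit (the completion of the square to the $(s,t)$-ellipse and the verification that $p_i\in[-1/(d^2-1),1]$ is forced), and all of those computations check out.
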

\begin{proof}
	A pair $(p_1, p_2) \in  \big[\frac{-1}{d^2-1}, 1\big]$ is feasible if and only if there exists coefficients $\alpha, \beta \in \mathbb R$ and $\gamma \in \mathbb C$ satisfying Eq.~\eqref{CloningMapCondition:1}, with
	$$p_1  = \alpha d + 2 \Re(\gamma) = 1-\beta d  \qquad \text{ and } \qquad p_2 = \beta d + 2 \Re(\gamma) = 1- \alpha d.$$
	A complex $\gamma$ satisfying both equations from \eqref{CloningMapCondition:1} exists if and only if 
	$$\left(\frac{1-d(\alpha+\beta)}{2} \right)^2 \leq \alpha \beta.$$
Rewriting the above inequality in terms of $p_{1,2}$ yields the equation in the statement. 
\end{proof}

	Note that the above result also contains the admissible values for \emph{negative} singlet fractions, a regime which we study for the first time. 
	
	The origin $(0,0)$ does not belong the admissible set for $d \geq 3$; this is due to the restriction on the form of the Choi matrix we consider. The origin corresponds in particular to the totally depolarizing channel, which is ruled out since the identity matrix is not allowed in the current setting. 

Finally, note that, for any $d$, the largest symmetric admissible point $(p,p)$ is given by
$$p_{\max} = \frac{2+d}{2(1+d)},$$
which is the value obtained in \cite{werner1998optimal}. In particular, the optimal symmetric cloning channels can be chosen from the restricted family of cloners considered in this section. 


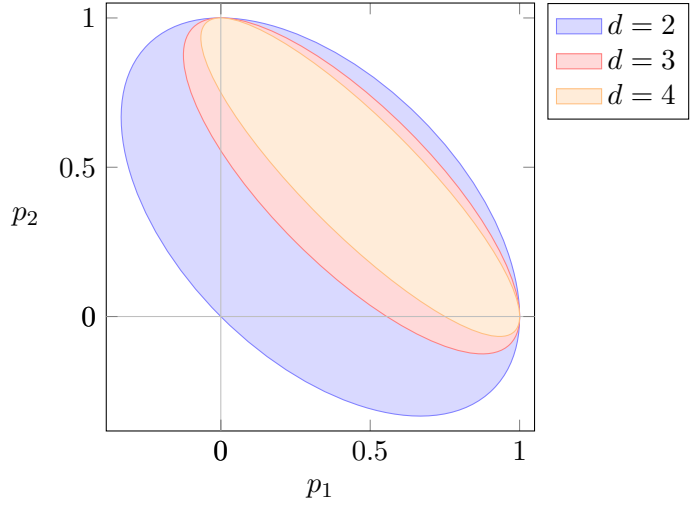
\begin{figure}[H]
    \centering
    \begin{tikzpicture}
        \begin{axis}[view = {0} {90},
                     xlabel = $p_1$,
                     ylabel = $p_2$,
                     ylabel style = {rotate = -90},
                     xmin =  {-1 / (2^2 - 1) - 0.05},
                     xmax = {1 + 0.05},
                     ymin = {-1 / (2^2 - 1) - 0.05},
                     ymax = {1 + 0.05},
                     axis equal image,
                     axis on top = true,
                     extra x ticks = 0,
    	             extra y ticks = 0,
    	             extra tick style = {grid = major},
    	             legend pos = outer north east]
            \begin{scope}[rotate around = {-45:(0,0)}]
                \draw[draw = blue!50!white,
                      fill = blue!15!white]
                    (0,{sqrt(2) / 3})
                    ellipse [x radius = {sqrt(2 / 3)}, y radius = {sqrt(2) / 3}];
                \draw[draw = red!50!white,
                      fill = red!15!white]
                    (0,{7 / (8 * sqrt(2))})
                    ellipse [x radius = {3 / 4}, y radius = {3 / (8 * sqrt(2))}];
                \draw[draw = orange!50!white,
                      fill = orange!15!white]
                    (0,{(7 * sqrt(2)) / 15})
                    ellipse [x radius = {2 * sqrt(2 / 15)}, y radius = {(2 * sqrt(2)) / 15}];
            \end{scope}
            \addlegendimage{area legend, draw = blue!50!white, fill = blue!15!white}
            \addlegendentry{$d = 2$}
            \addlegendimage{area legend, draw = red!50!white, fill = red!15!white}
            \addlegendentry{$d = 3$}
            \addlegendimage{area legend, draw = orange!50!white, fill = orange!15!white}
            \addlegendentry{$d = 4$}
        \end{axis}
    \end{tikzpicture}
    \caption{Admissible regions for the figures of merit with~$4$ permutation operators.}
    \label{FiguresMerit:1}
\end{figure}

\section{General quantum cloners}\label{sec:general}
We consider in this section the general case of $1 \to 2$ quantum cloners, and compute the form of the admissible region for the singlet fractions of the marginals. The complete description of a Choi matrix~$C_{\widetilde{T}}$ for the~$1 \to 2$ asymmetric quantum cloning problems is given the 6 partially transposed permutation operators of~$\mathfrak{S}_3$, i.e.
\begin{equation*}
    C_{\widetilde{T}} = \alpha \cdot \smallRT{12} + \beta \cdot \smallRT{13} + \gamma \cdot \smallRT{123} + \bar{\gamma} \cdot \smallRT{321} + \varepsilon_1 \cdot \smallRT{id} + \varepsilon_2 \cdot \smallRT{23}.
\end{equation*}
The hermitian condition on~$C_{\widetilde{T}}$ imposes that:~$\alpha, \beta,\varepsilon_1, \varepsilon_2 \in \mathbb{R}$, and the partial trace condition reads~$d (\alpha + \beta) + 2  \Re(\gamma) + d^2 \varepsilon_1 + d \varepsilon_2 = 1$. As in the previous section, the action of $C_{\widetilde{T}}$ decomposes on the $d$ subspaces spanned by the vectors $u_i,v_i$ from Eqs.~\eqref{eq:def-ui}-\eqref{eq:def-vi}. Let~$\boldsymbol{V} = \Span \bigg{\{} \V{1} , \V{2} \bigg{\}}$, then~$\boldsymbol{V} \subset \mathcal{H} \otimes \vee^2(\mathcal{H})$, where~$\vee^2(\mathcal{H})$ is the \emph{symmetric subspace} on~$\mathcal{H}^{\otimes 2}$. The subspace~$\boldsymbol{V}$ is invariant by the two operators $\smallRT{id}$ and~$\smallRT{23}$ that have both full ($d^3$) rank. On~$\boldsymbol{V}^\perp$, the spectrum of~$\varepsilon_1 \cdot \smallRT{id} + \varepsilon_2 \cdot \smallRT{23}$ is
\begin{equation*}
    \spec {\raisebox{-0.3em}{\bigg{|}}}_{\boldsymbol{V}^\perp} \bigg{(} \varepsilon_1 \cdot \smallRT{id} + \varepsilon_2 \cdot \smallRT{23} \bigg{)} =
        \begin{cases}
            \varepsilon_1 + \varepsilon_2 &\times \: d \, \frac{d (d + 1)}{2} - 2 d \\
            \varepsilon_1 - \varepsilon_2 &\times \: d \, \frac{d (d - 1)}{2}
        \end{cases}.
\end{equation*}
Then the complete block diagonal decomposition of~$C_{\widetilde{T}}$ is made of~$2 \times 2$ and~$1 \times 1$ blocks. The~$1 \times 1$ blocks~$(\varepsilon_1 + \varepsilon_2)$ and~$(\varepsilon_1 - \varepsilon_2)$ are positive when~$\varepsilon_1 \geq |\varepsilon_2|$. On~$\boldsymbol{V}$ we have the block diagonal decomposition
\begin{equation*} 
    {\big{(} C_{\widetilde{T}} \big{)}}_i =
    \begin{pmatrix}
        \alpha \, d + \bar{\gamma} + \varepsilon_1 & \alpha + \bar{\gamma} \, d + \varepsilon_2 \\
        \beta + \gamma \, d + \varepsilon_2 & \beta \, d + \gamma + \varepsilon_1
    \end{pmatrix}.
\end{equation*}
After the change of basis
\begin{align*}
    \V{1} &\longmapsto \frac{1}{\sqrt{2 (d + 1)}} \bigg{(} \V{1} + \V{2} \bigg{)} \\
    \V{2} &\longmapsto \frac{1}{\sqrt{2 (d - 1)}} \bigg{(} \V{1} - \V{2} \bigg{)}
\end{align*}
we obtain the hermitian block diagonal decomposition of~$C_{\widetilde{T}}$
\begin{equation*}
    {\big{(} C_{\widetilde{T}} \big{)}}_i =
    \begin{pmatrix}
        (d + 1) \frac{\alpha + \beta + 2 \cdot \Re(\gamma)}{2} + \varepsilon_1 + \varepsilon_2 & \frac{\sqrt{d^2 - 1}}{2} (\alpha - \beta) \\[1em]
        \frac{\sqrt{d^2 - 1}}{2} (\alpha - \beta) & (d + 1) \frac{\alpha + \beta - 2 \cdot \Re(\gamma)}{2} + \varepsilon_1 - \varepsilon_2
    \end{pmatrix}.
\end{equation*}

The two figures of merit of~$\widetilde{T}$ are~$p_1 = \alpha \, d + 2  \Re(\gamma)$ and~$p_2 = \beta \, d + 2 \Re(\gamma)$. By setting~$s: = p_1 + p_2$,~$t := p_1 - p_2$ together with the relation~$d (\alpha + \beta) + 2  \Re(\gamma) + d^2 \varepsilon_1 + d \varepsilon_2 = 1$ we have
\begin{align*}
    {\big{(} C_{\widetilde{T}} \big{)}}_i &=
    \begin{pmatrix}
        \frac{(d + 1)}{2 d} s + \frac{(d + 1) (d - 2)}{d} \Re(\gamma) + \varepsilon_1 + \varepsilon_2 & \frac{\sqrt{d^2 - 1}}{2 d} t \\[1em]
       \frac{\sqrt{d^2 - 1}}{2 d} t & \frac{(d + 1)}{2 d} s - \frac{(d - 1) (d + 2)}{d} \Re(\gamma) + \varepsilon_1 - \varepsilon_2
    \end{pmatrix} \\[1em]
    &=
    \begin{pmatrix}
        \frac{(d^2 - 1) s + (d - 1) d \big{(} (d^2 - 2) \varepsilon_1 + \varepsilon_2 \, d - 1 \big{)} + 2}{2 d} & \frac{\sqrt{d^2 - 1}}{2 d} t \\[1em]
        \frac{\sqrt{d^2 - 1}}{2 d} t & - \frac{(d^2 - 1) s + (d + 1) d \big{(} (d^2 - 2) \varepsilon_1 + \varepsilon_2 \, d - 1 \big{)} + 2}{2 d}
    \end{pmatrix}.
\end{align*}
Let~$\lambda := - d \big{(} (d^2 - 2) \varepsilon_1 + \varepsilon_2 \, d - 1 \big{)}$, then the hermitian block diagonal decomposition of~$C_{\widetilde{T}}$ reduces to 
\begin{equation} \label{blockDiagonalization:2}
    {\big{(} C_{\widetilde{T}} \big{)}}_i = \frac{1}{2 d}
    \begin{pmatrix}
        (d^2 - 1) s - (d - 1) \lambda + 2 & \sqrt{d^2 - 1} \, t \\[1em]
        \sqrt{d^2 - 1} \, t & - \big{(} (d^2 - 1) s - (d + 1) \lambda + 2 \big{)}
    \end{pmatrix}.
\end{equation}
In this way, the positivity condition on each of the~$2 \times 2$ blocks of equation (\ref{blockDiagonalization:2}) becomes
\begin{equation} \label{ellipse:1}
    0 \leq \lambda \leq d \qquad \text{and} \qquad \frac{t^2}{a^2} + \frac{(s - c)^2}{b^2} \leq \lambda^2,
\end{equation}
with~$a = \frac{1}{\sqrt{d^2 - 1}}$,~$b = \frac{1}{d^2 - 1}$,~$c = \frac{\lambda d - 2}{d^2 - 1}$, where the last condition is the equation of a \emph{shifted ellipse}, see Figure \ref{FiguresMerit:2}. In conclusion, we have just proven the main result of this section, a characterization of the admissible region of figures of merit for the $1 \to 2$ asymmetric quantum cloning problem.

\begin{theorem}\label{thm:unrestricted}
    The admissible region of the singlet fractions for the universal~$1 \to 2$ asymmetric quantum cloning is the union of a family of ellipses indexed by $\lambda \in [0,d]$ given by \begin{equation}
    \qquad \frac{t^2}{a^2} + \frac{(s - c_\lambda)^2}{b^2} \leq \lambda^2,
\end{equation}
with~$a = \frac{1}{\sqrt{d^2 - 1}}$,~$b = \frac{1}{d^2 - 1}$,~$c_\lambda = \frac{\lambda d - 2}{d^2 - 1}$.
\end{theorem}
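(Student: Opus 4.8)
The plan is to turn Choi's Theorem \ref{ChoiTheorem:1} into an explicit feasibility statement for the pair $(s,t)$. A symmetrized cloner is admissible exactly when $C_{\widetilde T}\succeq 0$ and the trace constraint $d(\alpha+\beta)+2\Re(\gamma)+d^2\varepsilon_1+d\varepsilon_2=1$ holds. Since $C_{\widetilde T}$ reduces along the decomposition into the $d$ planes $\boldsymbol V_i=\Span\{u_i,v_i\}$ and their common orthogonal complement $\boldsymbol V^\perp$, I would use that $C_{\widetilde T}\succeq 0$ if and only if every block is positive: on $\boldsymbol V^\perp$ this is the pair of scalar conditions $\varepsilon_1\pm\varepsilon_2\ge 0$, i.e. $\varepsilon_1\ge|\varepsilon_2|$, and on each $\boldsymbol V_i$ it is positivity of the $2\times 2$ block \eqref{blockDiagonalization:2}. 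I would first note that $\Im(\gamma)$ may be set to $0$ without loss of generality: it is absent from $s,t$ and from the trace constraint, and it enters the determinant of the $\boldsymbol V_i$-block only through a term $-(d^2-1)(\Im\gamma)^2\le 0$, so discarding it can only help positivity; this is the reduction already reflected in the real off-diagonal of \eqref{blockDiagonalization:2}.

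Next I would dispose of the $2\times 2$ block via the standard criterion that a Hermitian $2\times 2$ matrix is positive iff its trace and determinant are nonnegative. With $\lambda:=-d\big((d^2-2)\varepsilon_1+d\varepsilon_2-1\big)$ the trace of the block equals $\lambda/d$, so the trace condition is exactly $\lambda\ge 0$, while the determinant condition is exactly the shifted-ellipse inequality $\frac{t^2}{a^2}+\frac{(s-c_\lambda)^2}{b^2}\le\lambda^2$ of \eqref{ellipse:1}. It then remains to pin down the range of $\lambda$. The lower bound $\lambda\ge 0$ is the trace condition itself; for the upper bound I would feed in $\varepsilon_1\ge|\varepsilon_2|$: since then $\varepsilon_1\ge 0$ and $d\varepsilon_2\ge-d\varepsilon_1$, one gets $(d^2-2)\varepsilon_1+d\varepsilon_2\ge(d-2)(d+1)\varepsilon_1\ge 0$ for every $d\ge 2$, whence $\lambda=d-d\big((d^2-2)\varepsilon_1+d\varepsilon_2\big)\le d$.

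Finally I would check realizability, so that the admissible region is \emph{exactly} the advertised union rather than merely contained in it. Given a target $\lambda\in[0,d]$, the equation $(d^2-2)\varepsilon_1+d\varepsilon_2=1-\lambda/d$ is solvable with $\varepsilon_1\ge|\varepsilon_2|$ (for instance on the diagonal $\varepsilon_1=\varepsilon_2\ge 0$), which keeps the scalar blocks positive; and for any $(s,t)$ the defining relations $p_1=\alpha d+2\gamma$, $p_2=\beta d+2\gamma$ together with the trace constraint determine $\alpha,\beta,\gamma$ uniquely, so the $\boldsymbol V_i$-block is positive precisely when $(s,t)$ lies in the ellipse attached to that $\lambda$. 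Taking the union over $\lambda\in[0,d]$ yields the statement. The step I expect to be the real obstacle is exactly this coupling: the single parameter $\lambda$ governs both positivity conditions at once, so one must verify that forcing a prescribed $\lambda$ (to place $(s,t)$ on a given ellipse) never clashes with $\varepsilon_1\ge|\varepsilon_2|$, and that the residual freedom in $\alpha,\beta,\gamma$ still reaches every point of that ellipse — both of which the back-solving above settles.
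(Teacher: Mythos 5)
Your proposal is correct and follows essentially the same route as the paper: the block decomposition into the planes $\Span\{u_i,v_i\}$ and their orthogonal complement, the trace/determinant criterion on the $2\times 2$ blocks, and the substitution $\lambda=-d\bigl((d^2-2)\varepsilon_1+d\varepsilon_2-1\bigr)$ turning the determinant condition into the shifted ellipse. You additionally make explicit several points the paper leaves implicit --- that $\Im(\gamma)$ may be set to zero, that the bound $\lambda\le d$ comes from the scalar blocks via $\varepsilon_1\ge|\varepsilon_2|$, and that every $\lambda\in[0,d]$ and every point of the corresponding ellipse is actually realizable --- all of which are verified correctly.
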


\begin{figure}[H]
    \centering
    \begin{tikzpicture}
        \begin{axis}[view = {0} {90},
                     xlabel = $p_1$,
                     ylabel = $p_2$,
                     ylabel style = {rotate = -90},
                     xmin =  {-1 / (2^2 - 1) - 0.05},
                     xmax = {1 + 0.05},
                     ymin = {-1 / (2^2 - 1) - 0.05},
                     ymax = {1 + 0.05},
                     axis equal image,
                     axis on top = true,
                     extra x ticks = 0,
    	             extra y ticks = 0,
    	             extra tick style = {grid = major}]
    	       \begin{scope}[rotate around = {-45:(current axis.origin)}]
                    \draw[fill,orange!30!white] (0,{sqrt(2) / 3})
                        ellipse [x radius = {sqrt(2 / 3)},y radius = {sqrt(2) / 3}];
                \end{scope}
                \draw[fill,orange!30!white] ({-1 / 3},{-1 / 3}) -- ({-1 / 3},{2 / 3}) -- ({2 / 3},{-1 / 3})-- ({-1 / 3},{-1 / 3});
            \begin{scope}[rotate around = {-45:(0,0)}]
                \draw[draw = blue!50!white,
                      fill = blue!15!white,
                      fill opacity = 0.8]
                    (0,0.424264)
                    ellipse [x radius = 0.775672, y radius = 0.447834];
                \draw[draw = blue!50!white,
                      fill = blue!20!white,
                      fill opacity = 0.6]
                    (0,0.235702)
                    ellipse [x radius = 0.612372, y radius = 0.353553];
                \draw[draw = blue!50!white,
                      fill = blue!30!white,
                      fill opacity = 0.4]
                    (0,0)
                    ellipse [x radius = 0.408248, y radius = 0.235702];
                \draw[draw = blue!50!white,
                      fill = blue!30!white,
                      fill opacity = 0.2]
                    (0,-0.141421)
                    ellipse [x radius = 0.285774, y radius = 00.164992];
            \end{scope}
        \end{axis}
    \end{tikzpicture}
    \caption{The admissible region \begin{tikzpicture} \fill[orange!30!white] (0,0)rectangle(1.5ex,1.5ex); \end{tikzpicture} of the figures of merit of a general $1 \to 2$ asymmetric cloner is the union of a continuous family of ellipses \begin{tikzpicture} \fill[blue!30!white] (0,0)rectangle(1.5ex,1.5ex); \end{tikzpicture} (four ellipses shown, with~$\varepsilon_1 = \varepsilon_2$ and~$d = 2$).}
    \label{FiguresMerit:2}
\end{figure}

Let us point out that, in the general setting of this section, we can reach the origin $(0,0)$ for any dimension, by varying the parameters $\varepsilon_{1,2}$. However, the \emph{optimal} quantum cloners (i.e.~the points in the upper-right part of Figure \ref{FiguresMerit:2} are the same as the ones from Theorem \ref{thm:restricted}: optimal quantum cloners only require the 4 permutation matrices discussed in Section \ref{sec:restricted}. This is to be expected, since the fifth and the sixth permutation operators we consider in this section do not contribute to increasing the figures of merit. The coefficients $\varepsilon_{1,2}$ allow to sweep the lower-left part of the admissible region, away from the optimal curve on the top-right. 

\bigskip
\noindent\textit{Acknowledgments.} This research was supported by the ANR project ``\href{https://esquisses.math.cnrs.fr/}{ESQuisses}'' (ANR-20-CE47-0014-01). This research was supported by the ANR Program`Investissements d'Avenir" with reference ANR-11-LABX-0040 trough the Labex CIMI. The research of C.P.\ has been supported by project  ``\href{https://qtraj.math.cnrs.fr/}{QTraj}''(ANR-20-CE40-0024-01) of the French National Research Agency (ANR)

\bibliographystyle{alpha}
\bibliography{bibliography}

\end{document}